\newcommand{\citet}{\cite}
\newtheorem{theorem}{Theorem}
\newtheorem{observation}{Observation}
\newtheorem{define}{Definition}
\newtheorem{lemma}{Lemma}
\newtheorem{claim}{Claim}
\newtheorem{example}{Example}
\newcommand{\E}{\mathbf{E}}
\renewcommand{\vec}{\bm}
\title{Group Fairness in Committee Selection}
\date{\today}
\newcommand*\samethanks[1][\value{footnote}]{\footnotemark[#1]}
\author{
Yu Cheng\thanks{Department of Computer Science, Duke University. \tt{\{yucheng, kamesh, knwang\}@cs.duke.edu}}
\and
Zhihao Jiang\thanks{Institute for Interdisciplinary Information Sciences, Tsinghua University. \tt{jzh16@mails.tsinghua.edu.cn}}
\and
Kamesh Munagala\samethanks[1]
\and
Kangning Wang\samethanks[1]
}
\begin{document}
\maketitle

\begin{abstract}
In this paper, we study {\em fairness} in committee selection problems. We consider a general notion of fairness via stability: A committee is \emph{stable} if no coalition of voters can deviate and choose a committee of proportional size, so that all these voters strictly prefer the new committee to the existing one. Our main contribution is to extend this definition to stability of a distribution (or lottery) over committees. We consider two canonical voter preference models: the {\sc Approval Set} setting where each voter approves a set of candidates and prefers committees with larger intersection with this set; and the {\sc Ranking} setting where each voter ranks committees based on how much she likes her favorite candidate in a committee. Our main result is to show that stable lotteries always exist for these canonical preference models. %, and they can be efficiently computed \yc{do we know how to compute efficiently for {\sc approval set}?}.
Interestingly, given preferences of voters over committees, the procedure for computing an approximately stable lottery is the {\em same} for both models and therefore extends to the setting where some voters have the former preference structure and others have the latter. Our existence proof uses the probabilistic method and a new large deviation inequality that may be of independent interest. 
\end{abstract}

% !TEX root = main.tex

\section{Introduction}
\label{sec:intro}
\newcommand{\N}{\mathcal{N}}
\newcommand{\C}{\mathcal{C}}
One of the central problems in social choice theory is {\em committee selection}, or {\em multi-winner elections}. In this problem, there is a set of voters (or agents) $\N = [n] = \{1, \ldots, n\}$, and a set of candidates $\C = [m]$. A committee is a subset of candidates, and the goal is to find a committee of given size $K$.  Committee selection arises in choosing a parliament, in making group hiring decisions, and in computer systems~\cite{ROBUS,Psomas}. A long line of recent literature~\cite{Procaccia2008,Meir2008,Lu2011,Brill,Sanchez,PJR2018} has studied the complexity and axiomatization of voting rules in this setting.

One classic objective in committee selection is {\em fairness} or {\em proportionality}: Every demographic of voters should feel that they have been fairly represented. They should not have the incentive to deviate and choose their own committee of proportionally smaller size which gives all of them higher utility. In the typical setting where these demographic slices are not known upfront, the notion of proportionality attempts to be fair to all subsets of voters. This general idea dates back more than a century~\cite{Droop}, and has recently received significant attention~\cite{Chamberlain,Monroe,Brams2007,Brill,Sanchez,PJR2018}. In fact, there are several elections, both at a group level and national level, that attempt to find committees (or parliaments) that provide approximately proportional representation. For instance, the popular Single Transferable Vote (STV) rule is used in parliamentary elections in Ireland and Australia, and in several municipal elections in the USA. This rule attempts to find a proportional solution.
%\yc{I don't know this STV rule, should we explain it a bit so I can get this example?}

Fairness in committee selection arises in many other applications outside of social choice as well. For example, consider a shared cache for data items in a multi-tenant cloud system, where each data item is used by several long-running applications~\cite{ROBUS,Psomas}. Each data item can be treated as a candidate, and each application as a voter whose utility for an item corresponds to the speedup obtained by caching that item. In this context, we need a {\em fair} caching policy that provides proportional speedup to all applications.

In this paper, we propose a new notion of proportionality in committee selection that generalizes several previously considered notions. Our main contribution is to show that stable solutions always exist for the stability notion that we propose, and such solutions can be computed efficiently. In contrast, stable solutions may not exist for some of the previously studied notions, and for the notions where stable solutions do exist, we do not know how to compute them efficiently (see Section~\ref{sec:related} for a detailed discussion).

\subsection{Preference Models}
Before proceeding further, we define the preference model of the voters for committees. We consider two canonical ordinal preference models over committees, both of which have been extensively studied in social choice literature.

\begin{description}
\item[{\sc Approval Set}.] In this model~\cite{Brams2007,Brill,Sanchez,PJR2018}, each voter $v$ specifies an {\em approval set} $A_v \subseteq \C$ of candidates. Given two committees $S_1$ and $S_2$, $S_1 \succ_v S_2$ iff $|S_1 \cap A_v| > |S_2 \cap A_v|$, {\em i.e.}, the voter strictly prefers committees in which she has more approved candidates.
\item[{\sc Ranking}.] In this model~\cite{ElkindFSS17}, each voter $v$ has a preference order over candidates in $\C$. In this case, $S_1 \succ_v S_2$ iff $v$'s favorite candidate in $S_1$ is ranked higher (in her preference ordering) than her favorite candidate in $S_2$. The Chamberlin-Courant voting rule~\citet{Chamberlain} for committee selection finds the social optimum assuming a cardinal preference function of this form.
\end{description}

These models have been extensively studied because it is relatively simple to elicit an approval vote or a ranking over candidates. Viewed in terms of underlying cardinal utility functions\footnote{Note that we do not elicit these cardinal utilities.}, both these models are special cases of {\em submodular utilities}: Voter $v$ has utility $u_v(S)$ for committee $S \subseteq \C$, where $u_v$ is a submodular set function. The voter prefers committees that give her larger utility.  The {\sc approval set} case corresponds to $u_v(S) = f_v(|S \cap A_v|)$ for some increasing, concave function $f_v$.  The {\sc ranking} case can be modeled as follows: Voter $v$ has utility $u_{vk}$ for candidate $k \in \C$, and we set $u_v(S) = \max_{k \in S} u_{vk}$. The Chamberlin-Courant rule sets $u_{vk} = m - \ell$ ({\em i.e.}, the Borda score) if candidate $k$ is ranked at position $\ell$ by voter $v$.

%Note that our fairness definitions and associated algorithms {\em only require} ordinal preferences over committees, and not the knowledge of the underlying cardinal utility functions. In particular, all we require is the set of approved candidate for each voter in the {\sc approval set} setting, and a preference order over candidates for each voter in the {\sc ranking} setting.
%\yc{I don't get the point of this paragraph, partly because we never defined cardinal stability? (Are we trying to say that the dependent rounding part works for any submodular cardinal utility?)}

\subsection{Group Fairness via Stability}
The notion of fairness we study is defined for arbitrary ordinal preferences over committees. We study fairness via the notion of stability, which has been extensively studied (in similar or more restricted forms) in economics~\cite{lindahlCore,scarfCore,coreConjectureCounter} and computer science~\cite{Fain2016,FainMS18,Brill,PJR2018}. We first define the notion of {\em capture count}.

\begin{define} [Capture Count]
\rm
Given two committees $S_1, S_2 \subseteq \C$, the \emph{capture count} of $S_2$ over $S_1$ is the number of voters who strictly prefer $S_2$ to $S_1$:
\[
V(S_1,S_2) = | \{ v \in \N \ | \ S_2 \succ_v S_1\}| \; . \]
\end{define}

We are now ready to define the notion of stability:
\begin{define}[Stable Committees]
\label{def:stable}
\rm
Given a committee $S \subseteq \C$ of size $K$, we say that a committee $S' \subseteq \C$ blocks $S$ iff
\[ V(S,S') \ge \frac{|S'|}{K} \cdot n \; . \]
For any $1 \le L \le K$, a committee $S$ is $L$-\emph{stable} if there are no committees $S'$ of size {\em at most} $L$ that block it. We term $K$-stable as just ``stable''.
\end{define}

In other words, for any sub-group of voters of size $\beta n$ for some $\beta \in (0,1]$, the committee $S$ is fair in the sense that this sub-group cannot form another committee of size $\lfloor \beta K \rfloor$  so that all members of this sub-group are strictly better off. The notion of $L$-stability restricts this definition to ``small'' subgroups with $\beta \le \frac{L}{K}$.

The classical interpretation of this concept in Economics~\cite{lindahlCore,scarfCore,coreConjectureCounter} is in terms of {\em taxation}: Each voter has a dollar of money. If the goal is to find a committee of size $K$, we assume each committee member has a cost of $\frac{n}{K}$. A committee $S$ of size $K$ is stable if there is no subset of voters who can deviate and ``buy'' another committee $S'$ using their share of money, so that all voters in this deviating coalition are strictly better off with $S'$ than with $S$. Note that the amount of money required to buy $S'$ is precisely $\frac{|S'|}{K} \cdot n$, and the amount of money available with the deviating coalition of voters who strictly prefer $S'$ to $S$ is precisely the capture count $V(S,S')$. Therefore $S'$ is blocking if the deviating coalition of voters have sufficient funds to pay for $S'$.

\medskip
Note that a stable outcome $S$ is (weakly) Pareto-optimal among committees of size at most $K$; if it were not, consider a Pareto-dominating committee $S'$. Since $V(S,S') = n$, the committee $S'$ would be blocking, which contradicts the stability of $S$. In fact, it says something stronger: For every coalition of voters, a stable committee is also Pareto-optimal relative to committees whose size is suitably scaled down. Stability also logically implies other notions of stability considered in multi-winner election literature. For instance, the special case of $1$-stability has been extensively studied as {\em justified representation}, and we discuss these connections further in Section~\ref{sec:related}.

\iffalse
\paragraph{Comparison with Related Notions.} We now present a brief comparison of the above notion to closely related notions, and defer a more detailed comparison to Section~\ref{sec:related}. Note that a stable outcome $S$ is clearly Pareto-optimal among committees of size at most $K$; if it were not, consider a Pareto-dominating committee $S'$. Since $V(S,S') = n$, the committee $S'$ would be blocking, which contradicts the stability of $S$. In fact, it says something stronger: For every coalition of voters, a stable committee is also Pareto-optimal relative to committees whose size is suitably scaled down.

Consider the case of $L = 1$ in the {\sc approval set} setting. A $1$-stable committee $S$ satisfies the following property: There is no candidate $j \in \C$ such that there are at least $\frac{n}{K}$ voters who approve $j$ but do not approve any candidate in $S$. This is exactly {\em justified representation}~\cite{Brill} in multi-winner election literature.  This notion is generalized to extended justified representation~\cite{Brill} and proportional justified representation~\cite{Sanchez,PJR2018} in subsequent work, and as we discuss in Section~\ref{sec:related}, these are logically implied by Definition~\ref{def:stable}.
\fi

\medskip
In the {\sc Ranking} setting, we can make the following simple observation: $V(S,S') \le \sum_{j \in S'} V(S,j)$, since each deviating voter $v$ prefers $S'$ because of her favorite candidate in it. Therefore, by an averaging argument, if $S'$ blocks $S$, then so does at least one $j \in S'$. This immediately gives the following observation:

\begin{observation}
\label{obs1}
In the {\sc Ranking} setting, a committee  is stable iff it is $1$-stable.
\end{observation}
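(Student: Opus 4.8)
The plan is to prove the biconditional by establishing both directions, where one is trivial and the other exploits the averaging inequality stated just before the observation. Recall the claim: in the {\sc Ranking} setting, a committee $S$ of size $K$ is stable (i.e. $K$-stable) if and only if it is $1$-stable.

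The forward direction is immediate from the definition of $L$-stability: since stability means $K$-stability, and any $S'$ of size at most $1$ is in particular of size at most $K$, a $K$-stable committee cannot be blocked by any singleton, hence it is $1$-stable. This holds in any setting and requires no structural assumption.

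The substantive direction is the converse: $1$-stability implies $K$-stability in the {\sc Ranking} setting. Here I would argue the contrapositive. Suppose $S$ is \emph{not} stable; then there exists a blocking committee $S'$ of size at most $K$, meaning $V(S,S') \ge \frac{|S'|}{K} \cdot n$. The key tool is the inequality $V(S,S') \le \sum_{j \in S'} V(S,\{j\})$, which the excerpt justifies by noting that every voter who strictly prefers $S'$ to $S$ does so because of her single favorite candidate in $S'$, and that candidate $j$ alone already satisfies $\{j\} \succ_v S$. Combining these, $\sum_{j \in S'} V(S,\{j\}) \ge V(S,S') \ge \frac{|S'|}{K}\cdot n$. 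By an averaging (pigeonhole) argument over the $|S'|$ summands, at least one candidate $j^\star \in S'$ satisfies $V(S,\{j^\star\}) \ge \frac{1}{|S'|} \cdot \frac{|S'|}{K}\cdot n = \frac{n}{K} = \frac{|\{j^\star\}|}{K}\cdot n$. Thus the singleton $\{j^\star\}$ blocks $S$, so $S$ is not $1$-stable. Taking the contrapositive yields that $1$-stability implies stability.

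I do not anticipate a genuine obstacle here, since the averaging step is precisely the argument sketched in the paragraph preceding the observation; the proof is essentially a two-line formalization of that remark together with the trivial forward direction. The one point to state carefully is the justification of $V(S,S') \le \sum_{j\in S'} V(S,\{j\})$: each voter counted in $V(S,S')$ has some favorite candidate $j \in S'$ strictly better (in her ranking) than her favorite in $S$, and for that same $j$ we have $\{j\} \succ_v S$ by definition of the {\sc Ranking} preference, so she is counted in $V(S,\{j\})$ for at least one $j$. This assignment of each deviating voter to (at least) one candidate is what makes the sum an upper bound, and it is exactly where the {\sc Ranking} structure (utility determined by the single favorite) is used — an analogous bound need not hold in the {\sc Approval Set} setting.
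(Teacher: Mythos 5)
Your proof is correct and follows exactly the paper's argument: the bound $V(S,S') \le \sum_{j \in S'} V(S,\{j\})$, justified by the fact that each deviating voter prefers $S'$ on account of her single favorite candidate in it, followed by an averaging step to extract a blocking singleton. The paper states this more tersely in the paragraph preceding the observation, but your formalization (including the trivial forward direction) is the same proof.
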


\subsection{Stable Lotteries and Approximate Stability}
Given the strength of the definition, it is no surprise that stable committees may not exist for the {\sc ranking} setting. Consider the following example with cyclic preferences:

\begin{example}
There are $n = 6$ voters $\{1,2,3,4,5,6\}$ and $m = 6$ candidates $\{a,b,c,d,e,f\}$.
We need to choose a committee of size $K = 3$. The rankings of the voters are as follows:
\[\begin{array}{c|c}
\mbox{Voter} &  \mbox{Preferences} \\
\hline
1 & a \succ b \succ c \succ d \succeq e \succeq f \\
2 & b \succ c \succ a \succ d \succeq e \succeq f \\
3 & c \succ a \succ b \succ d \succeq e \succeq f \\
4 & d \succ e \succ f \succ a \succeq b \succeq c \\
5 & e \succ f \succ d \succ a \succeq b \succeq c \\
6 & f \succ d \succ e \succ a \succeq b \succeq c \\
\end{array} \]
Note that any committee chooses either at most one candidate from $\{a,b,c\}$ or at most one candidate from $\{d,e,f\}$. Assume w.l.o.g. that the former happens, and candidate $a$ is chosen (or no one in $\{a, b, c\}$ is chosen). Then voters $\{2,3\}$ can deviate and choose candidate $c$.  Therefore, there is no stable committee on this example.
\end{example}

In light of this impossibility result, we extend Definition~\ref{def:stable} to allow randomization in choosing the committee. Given a committee size $K$, we let $\Delta$ denote a distribution (or lottery) over committees of size $K$. Our first contribution is the following definition of stable lotteries.

\begin{define}[Stable Lotteries]
\label{def:lottery}
\rm
A distribution (or lottery) $\Delta$ over committees of size $K$ is said to be $L$-\emph{stable} ($1 \le L \le K$) iff for all committees $S' \subseteq \C$ with $|S'| \le L$, we have:
\[ \E_{S \sim \Delta} \left[ V(S,S') \right] < \frac{|S'|}{K} \cdot n \; . \]
We term $K$-stable lotteries as just ``stable''.
\end{define}

In the taxation interpretation, the above definition says the following: For any committee $S'$, if in every realization $S$ of $\Delta$, the voters who strictly prefer $S'$ over $S$ pay for $S'$, then in expectation there is not enough money to pay for $S'$.  This justifies why the voters do not deviate to $S'$ given $\Delta$. Further, we note that implementing a lottery is feasible in several computer systems applications. For instance, consider for example caching data items that are shared by several applications discussed above. In this context, a lottery over possible cache allocations can be implemented by splitting time into chunks, and caching the allocations in the lottery in different chunks of time.

\medskip
We note for the {\sc Ranking} setting, Observation~\ref{obs1} extends to lotteries because
\[ \E_{S \sim \Delta} \left[ V(S,S') \right] \le \sum_{j \in S'} \E_{S \sim \Delta} \left[ V(S,j) \right] \; . \]
\begin{observation}
\label{obs2}
In the {\sc Ranking} setting, a lottery is stable iff it is $1$-stable.
\end{observation}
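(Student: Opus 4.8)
The plan is to prove the two directions of the equivalence separately. The forward implication ($K$-stable $\Rightarrow$ $1$-stable) is essentially definitional, while the reverse implication ($1$-stable $\Rightarrow$ $K$-stable) is the substantive one and follows immediately from the subadditivity inequality stated just above the observation.

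For the forward direction, I would note that $K$-stability trivially implies $1$-stability: every committee $S'$ with $|S'| \le 1$ also satisfies $|S'| \le K$, so the family of constraints defining $1$-stability is a subset of those defining $K$-stability. Hence any lottery meeting all the $K$-stability constraints automatically meets the $1$-stability ones.

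The interesting direction is to show that $1$-stability implies $K$-stability. Here I would fix an arbitrary committee $S'$ with $1 \le |S'| \le K$ and invoke the inequality
\[ \E_{S \sim \Delta}\left[ V(S,S') \right] \le \sum_{j \in S'} \E_{S \sim \Delta}\left[ V(S,j) \right], \]
which holds in the {\sc Ranking} setting because each voter who strictly prefers $S'$ to $S$ does so on account of a single favorite candidate in $S'$, and therefore contributes to at least one of the singleton capture counts $V(S,j)$. Applying the $1$-stability hypothesis to each singleton $\{j\}$ gives $\E_{S \sim \Delta}[V(S,j)] < \frac{n}{K}$, and summing over the $|S'|$ candidates in $S'$ yields $\sum_{j \in S'} \E_{S \sim \Delta}[V(S,j)] < \frac{|S'|}{K} \cdot n$. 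Chaining this with the displayed inequality gives exactly the $K$-stability condition for $S'$; since $S'$ was arbitrary, $\Delta$ is $K$-stable.

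I do not anticipate a genuine obstacle: the only nontrivial ingredient is the subadditivity of the capture count, which is already argued in the discussion preceding the observation (and mirrors the deterministic reasoning behind Observation~\ref{obs1}). The sole point requiring minor care is preserving the strict inequality through the reduction, but since each per-singleton bound is already strict and $S'$ contains at least one candidate, strictness carries through the summation without difficulty.
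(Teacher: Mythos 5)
Your proof is correct and matches the paper's argument: both rest on the subadditivity inequality $\E_{S \sim \Delta}[V(S,S')] \le \sum_{j \in S'} \E_{S \sim \Delta}[V(S,j)]$, which the paper states just before the observation and combines with an averaging argument (the contrapositive of your direct summation of the strict per-singleton bounds). The only cosmetic difference is that you write out the summation explicitly while the paper phrases it as ``if $S'$ blocks, some singleton blocks''; these are the same argument.
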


\paragraph{Approximate Stability.} We next define approximately stable lotteries; this notion will play an important role in our computational results.

\begin{define}[$\epsilon$-approximate Stability]
\rm
For any $\epsilon > 0$, a distribution $\Delta$ over committees of size $K$ is said to be $\epsilon$-\emph{approximately} $L$-\emph{stable} if for all committees $S' \subseteq \C$ with $|S'| \le L$, we have:
\[ \E_{S \sim \Delta} \left[ V(S,S') \right] \le (1+\epsilon) \frac{|S'|}{K} \cdot n \; . \]
We term $\epsilon$-approximately $K$-stable lotteries as just $\epsilon$-approximately stable.
\end{define}

\subsection{Our Results}
Our main result in this paper is the following theorem, which states that a stable lottery always exists and it is reasonably tractable.
Recall that (as in Definition~\ref{def:lottery}) a lottery is $L$-stable if there are no committees of size {\em at most} $L$ that block it, and stable lotteries refer to $K$-stable lotteries.

\begin{theorem}[Main]
\label{thm:main}
In both the {\sc approval set} and {\sc ranking} settings with $m$ candidates, for any committee size $K$, a stable lottery over committees of size $K$ always exists. Furthermore, for any $\epsilon > 0$, an $\epsilon$-approximately $L$-stable lottery can be computed in $\mbox{poly}\left(m^L, \frac{1}{\epsilon} \right)$ time.
%$\tilde{O}\left(\frac{K}{\epsilon^2} m^{L+1}\right)$ \yc{$\tilde{O}\left(\frac{L}{\epsilon^2} m^{L+1} n\right)$?} time.
%\yc{I prefer to have $n$ and $m$ defined in the theorem for it to be self-contained.}
\end{theorem}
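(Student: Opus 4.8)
My plan is to recast the existence question as the value of a two–player zero-sum game and solve it by the probabilistic method. Consider the game in which the ``committee'' player picks a committee $S$ with $|S| = K$ (a mixed strategy here is exactly a lottery $\Delta$), the ``deviator'' picks a committee $S'$ with $|S'| \le L$, and the payoff to the deviator is $M(S,S') = V(S,S') - \frac{|S'|}{K} n$. By Definition~\ref{def:lottery}, a lottery $\Delta$ is $L$-stable precisely when $\max_{S'} \E_{S\sim\Delta}[M(S,S')] < 0$. Both strategy sets are finite, so the von Neumann minimax theorem applies, and it suffices to show the game value is negative; by duality this reduces to the following purely combinatorial claim: \emph{for every distribution $\mu$ over committees of size at most $L$, there exists a single committee $S$ of size $K$ with}
\[ \E_{S'\sim\mu}\bigl[V(S,S')\bigr] \;<\; \frac{n}{K}\,\E_{S'\sim\mu}\bigl[|S'|\bigr]. \]
If this holds for every $\mu$, then at the deviator's optimal $\mu^\ast$ we get $\min_S \E_{\mu^\ast}[M] < 0$, the value is negative, and the committee player's optimal mixed strategy is the desired stable lottery.

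To prove the claim for a fixed $\mu$, I would construct $S$ \emph{randomly} and bound the expected capture count. The construction is to draw committees $S'_1, S'_2, \dots$ i.i.d.\ from $\mu$ and pack them greedily into the budget: let $t$ be the largest index with $|S'_1| + \cdots + |S'_t| \le K$, set $S = S'_1 \cup \cdots \cup S'_t$, and pad up to size $K$ with candidates no voter likes. The key observation, which holds \emph{identically in both the} {\sc approval set} \emph{and} {\sc ranking} \emph{settings} because preferences are monotone under adding candidates, is that for any fresh test committee $\tilde S' \sim \mu$ and any voter $v$, the event $\tilde S' \succ_v S$ implies $\tilde S' \succ_v S'_i$ for \emph{every} packed $i \le t$ (since $S \supseteq S'_i$ gives $S \succeq_v S'_i$). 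Hence, writing the target expectation as $\E_S[\E_{S'\sim\mu}[V(S,S')]] = \sum_{v} \Pr_{\tilde S', S}[\tilde S' \succ_v S]$, each voter's contribution is at most the probability that $\tilde S'$ is a strict preference-maximum of the $t+1$ i.i.d.\ committees $\{\tilde S', S'_1, \dots, S'_t\}$. When all committees have a common size $L$, $t = K/L$ is deterministic, this probability is at most $\tfrac{1}{t+1} = \tfrac{L}{K+L}$, and summing over voters gives strictly less than $\tfrac{nL}{K} = \frac{n}{K}\E[|S'|]$, exactly matching the required strict bound.

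\textbf{The main obstacle} is the case of \emph{variable} committee sizes, where the packing count $t$ is a random stopping time correlated with the very samples whose preferences we are measuring; an adversarial $\mu$ could correlate ``high quality for $v$'' with ``large size'' so that fewer good committees fit, inflating $\Pr[\tilde S' \succ_v S]$. The symmetry bound $\tfrac{1}{t+1}$ no longer applies with a fixed denominator, so I expect the heart of the argument to be a new large-deviation inequality showing that $t$ concentrates around $K/\E[|S'|]$ tightly enough that $\sum_v \Pr[\tilde S' \succ_v S]$ still falls strictly below $\frac{n}{K}\E[|S'|]$ uniformly over all $\mu$. This concentration statement about records of i.i.d.\ knapsack-packed items, controlling the size–quality coupling, is the ingredient I would isolate and prove separately; everything else is bookkeeping.

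For the computational part, I would exploit that the deviator has only $m^L$ pure strategies. The defining inequalities of an $\epsilon$-approximately $L$-stable lottery form a feasible linear program (feasibility is guaranteed by the negative game value just established, and the $(1+\epsilon)$ relaxation absorbs the strictness margin). I would solve the game by running multiplicative weights on the deviator's $\mathrm{poly}(m^L)$ strategies, which needs $O(L\log m / \epsilon^2)$ rounds; each round requires a best-response oracle for the committee player, i.e.\ a committee $S$ with small $\E_{S'\sim\mu}[V(S,S')]$, which is furnished by the randomized packing construction above, derandomized by the method of conditional expectations (or by taking the best of polynomially many samples). Since each round costs $\mathrm{poly}(m^L)$ to evaluate all deviation constraints and $\mathrm{poly}(m)$ to run the oracle, the total running time is $\mathrm{poly}(m^L, \tfrac{1}{\epsilon})$, and averaging the committee player's responses yields the claimed $\epsilon$-approximately $L$-stable lottery.
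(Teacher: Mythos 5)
Your reduction to the dual game is exactly the paper's first step (Section~\ref{sec:zero}): stability of a lottery is equivalent to a zero-sum game having negative value, and by minimax it suffices to exhibit, for every attacker lottery $\mu$, a single size-$K$ committee with negative expected payoff. Your analysis of the equal-size case via monotonicity and exchangeability is correct. But the proposal has a genuine gap exactly where you flag it: the attacker's optimal mixed strategy will in general mix committees of \emph{different} sizes, and for that case you offer no argument, only the hope that a concentration inequality for the packing count $t$ will rescue the construction. This is not bookkeeping; it is the entire difficulty of the theorem. Moreover, concentration of $t$ alone cannot suffice: the problem is not merely that $t$ is random, but that conditioning on which samples fit in the knapsack size-biases the packed committees $S'_1,\dots,S'_t$ --- they are no longer distributed as i.i.d.\ draws from $\mu$, being skewed toward small committees --- so the exchangeability bound $\frac{1}{t+1}$ has no analogue even if $t$ were pinned exactly to $K/\E[|S'|]$. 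An adversary can couple quality with size so that the packed prefix is systematically unrepresentative of a fresh draw, and nothing in your outline controls this coupling. Note also that you cannot sidestep this by decomposing $\mu$ into its fixed-size slices, since the claim requires a \emph{single} committee $S$ that works against the whole mixture.

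The paper resolves precisely this difficulty by abandoning whole-committee packing and working at the level of per-candidate marginals: with $\beta = \E_{S_a\sim\Delta_a}[|S_a|]/K$ and $p_i = \Pr_{S_a \sim \Delta_a}[i \in S_a]$, the defender includes candidate $i$ with probability $\min(1,p_i/\beta)$ (topped up so the probabilities sum to $K$), realized via randomized dependent rounding (Lemma~\ref{lem:depend}) so that the committee has size exactly $K$ and the inclusion indicators are negatively correlated. The size constraint is thus enforced deterministically by the rounding and no stopping time ever appears; variable attacker sizes are absorbed into the single scalar $\beta$. The probabilistic heart is then a deviation inequality rather than a symmetry argument: for {\sc approval set}, Lemma~\ref{lem:deviation-chernoff} ($\Pr[X < \eta] < \eta/\mu$ for integer $\eta \ge 1$, proved by Chernoff bounds for negatively correlated variables), which yields $\Pr[X < Y] < \beta$ as in Lemma~\ref{lem:pm}; for {\sc ranking}, a stopping-process computation over the candidates in the voter's preference order. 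Your computational outline (multiplicative weights over the $m^L$ deviator strategies with a defender best-response oracle, estimating capture counts by sampling voters) matches the paper's Section~\ref{sec:efficient}, but it inherits the same gap, since the oracle is exactly the construction whose correctness is missing.
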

% We will prove Theorem~\ref{thm:main} in Section~\ref{sec:exist}.

Note that when combined with Observation~\ref{obs1}, this implies a running time of $\mbox{poly}\left(m,\frac{1}{\epsilon}\right)$ %$\tilde{O}\left(\frac{Km^2}{\epsilon^2}\right)$ \yc{$\tilde{O}\left(\frac{m^2 n}{\epsilon^2}\right)$?}
to find an $\epsilon$-approximately stable lottery in the {\sc Ranking} setting. Interestingly, assuming we can ask voters to compare any two committees, the algorithm for the {\sc Approval set} and {\sc Ranking} settings are exactly the same, and as the number of voters becomes large, the average number of queries asked to any individual voter goes to zero. Since the algorithm is the same in both cases, as a simple corollary, it also shows the existence of a stable solution when there is a mix of voters in the population, some with {\sc Approval set} preferences and others with {\sc Ranking} preferences.
%As discussed in Section~\ref{sec:approval}, %though our algorithm does not find a stable lottery when voters have arbitrary submodular utilities over committees,
%  it is an open question whether the same algorithm succeeds when voters have arbitrary additive utilities over candidates.

We note that Theorem~\ref{thm:main} does not follow by analyzing extant voting rules. For instance, in the {\sc approval set} setting, Aziz {\em et al.}~\citet{Brill} showed that the voting rule,  Proportional Approval Voting (PAV)~\cite{Thiele,PJR2018}, that satisfies justified representation, fails to find a stable outcome. In PAV, if voter $v$ approves $r_v$ candidates in the committee,  this voter is assigned a score $s_v = 1 + \frac{1}{2} + \cdots + \frac{1}{r_v}$; the winning committee maximizes $\sum_v s_v$.   We strengthen their example to show that PAV cannot find better than an $O(\sqrt{K})$-approximately stable committee (Theorem~\ref{lem:PAV_lower} in Section~\ref{sec:det}). Further, as mentioned above, deterministic stability is simply not possible for the {\sc ranking} setting, which rules out trying to prove the above result via analyzing any deterministic voting rule. In that sense, we find the existence of stable lotteries quite surprising.

Though not the main focus of the paper, we finally consider the existence of deterministic stable committees in the {\sc approval set} setting. We make progress on this question and show that when the committee size is $K \le 3$, a stable committee always exists (Theorem~\ref{thm:three} in Section~\ref{sec:det}). Prior to this, the only results known (even for $K = 3$) were the existence of committees that are approximately stable~\cite{FainMS18}.
% Extending our existence result from $K=3$ to general $K$ is an interesting open question.

\paragraph{Techniques.} The most interesting aspect of Theorem~\ref{thm:main} is its proof, which uses the probabilistic method. We show that the question of whether stable lotteries exist reduces to deciding if a zero-sum game has negative value. The dual problem involves finding a stable solution given a lottery over blocking committees. We show its existence by developing a {\em rounding procedure} for the dual problem. This procedure performs {\em probability matching}, and simply chooses candidates with probability proportional to their marginal probability in the blocking lottery. We then  argue that this rounding procedure always has negative expected value. Showing this for the {\sc Approval set} setting requires proving a new deviation inequality (Lemma~\ref{lem:pm}) for sums of random variables.
%To the best of our knowledge, Lemma~\ref{lem:pm} does not follow directly from existing results on large deviations. We discuss this in more detail in Section~\ref{sec:approval}.

%\begin{lemma}[Probability Matching Lemma]
%\label{lem:pm}
%Fix a parameter $\beta \in (0,1]$ and $s$ possibly correlated random variables $Y_1, Y_2, \ldots, Y_s$ with $Y_i \sim$ Bernoulli$(1,p_i)$. Suppose we construct $s$ independent random variables $X_1, X_2, \ldots, X_s$ with $X_i \sim $ Bernoulli$(1,q_i)$ where $q_i = \min\left(1, \frac{p_i}{\beta} \right)$, then
%\[ \Pr \left[\sum_{i=1}^s (X_i - Y_i) < 0 \right] < \beta \; . \]
%\end{lemma}

\begin{lemma}[Probability Matching Lemma]
\label{lem:pm}
Let $X$ be the sum of independent Bernoulli random variables: $X = X_1 + X_2 + \cdots + X_n$ where $X_i \sim$ Bernoulli$(1,p_i)$. Let $Y$ be any non-negative \textbf{integer} random variable with $\E[Y] \le \beta\E[X]$. Then,
\[
\Pr[X < Y] < \beta \; .
\]
\end{lemma}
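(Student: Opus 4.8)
The plan is to reduce the statement to a single clean per-threshold inequality about the Bernoulli sum, namely that for every integer $k \ge 1$ (and $\E[X] > 0$)
\[ \Pr[X < k] \;\le\; \frac{k}{\E[X]}, \]
and then average this against the distribution of $Y$. I will use that $Y$ is drawn independently of $X$ — the situation in the intended application, where $X$ comes from the matching lottery and $Y$ from the (separately sampled) blocking lottery — together with the emphasized integrality of $Y$. Both hypotheses are load-bearing: without integrality the event $\{X<Y\}$ cannot be rewritten as $\{X \le k-1\}$ for integer thresholds, and independence is essential because for correlated $Y$ the bound fails outright once $\E[X] \ge 1$ (take $Y = \mathbf{1}[X=0]$, so that $\Pr[X<Y] = \Pr[X=0] = \E[Y]$, which exceeds $\E[Y]/\E[X]$).

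The crux — and the only step that genuinely uses that $X$ is a \emph{sum of independent Bernoullis} rather than an arbitrary non-negative integer variable (for which the bound is false, e.g. $X$ supported on $\{0,M\}$ with $\E[X]$ large and $\Pr[X=0]$ near $1$) — is this per-threshold inequality. I would prove it by two-sided estimation of the quantity $\E[X \cdot \mathbf{1}[X \le k]]$. The upper bound is immediate: on $\{X \le k\}$ we have $X \le k$, so $\E[X \mathbf{1}[X\le k]] \le k\Pr[X \le k] \le k$. The lower bound is where independence enters: by linearity and $X_i \in \{0,1\}$,
\[ \E[X\,\mathbf{1}[X\le k]] = \sum_{i=1}^n \Pr[X_i = 1,\, X \le k] = \sum_{i=1}^n p_i \Pr[X - X_i \le k-1], \]
where the last equality conditions on $X_i = 1$ and uses independence of $X_i$ and $X - X_i$. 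Since $X - X_i \le X$ pointwise, $\Pr[X - X_i \le k-1] \ge \Pr[X \le k-1]$, so the right-hand side is at least $\Pr[X\le k-1]\sum_i p_i = \E[X]\,\Pr[X \le k-1]$. Chaining the two bounds yields $\E[X]\,\Pr[X \le k-1] \le k$, which is the claim because $\Pr[X<k] = \Pr[X \le k-1]$.

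To assemble the lemma, I condition on $Y$: by independence and integrality the $k=0$ term vanishes and $\Pr[X<Y] = \sum_{k \ge 1}\Pr[Y=k]\,\Pr[X<k]$. Applying the per-threshold bound termwise and using $\sum_{k\ge1} k\Pr[Y=k] = \E[Y]$ gives $\Pr[X<Y] \le \E[Y]/\E[X] \le \beta$. For the required strict inequality I would sharpen the per-threshold bound to $\Pr[X<k] < k/\E[X]$ for every $k \ge 1$: tracing the equality cases above, equality would force $X \le k$ almost surely with $X = k$ on $\{X \le k\}$, i.e.\ $X \equiv k$, whence $\Pr[X<k] = 0 < k/\E[X]$, a contradiction. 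Thus each term with $\Pr[Y=k]>0$ and $k\ge1$ is strict, so either $Y = 0$ almost surely (giving $\Pr[X<Y] = 0 < \beta$) or the sum is strictly below $\E[Y]/\E[X] \le \beta$.

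The main obstacle is exactly the per-threshold inequality; everything else (the conditioning on $Y$, the use of integrality, and the strictness bookkeeping) is routine once it is in hand. The delicate point is that the factor $\E[X]$ must be manufactured from the structure of $X$, and the identity $\E[X\,\mathbf{1}[X\le k]] = \sum_i p_i \Pr[X - X_i \le k-1]$ combined with the monotonicity $X - X_i \le X$ is precisely what converts $\sum_i p_i = \E[X]$ into a usable multiplicative factor against $\Pr[X \le k-1]$.
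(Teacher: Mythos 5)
Your proof is correct, and it shares the paper's overall skeleton---both arguments reduce the lemma to a per-threshold inequality $\Pr[X<\eta] < \eta/\E[X]$ for integers $\eta \ge 1$ (this is exactly the paper's Lemma~\ref{lem:deviation-chernoff}) and then average over the distribution of $Y$, treating $Y\equiv 0$ as a degenerate case---but the crux is proved by a genuinely different argument. The paper establishes the threshold bound via a Chernoff lower-tail estimate, $\Pr[X \le \eta-1] \le \exp\left(-\tfrac{1}{2}\left(1-\tfrac{\eta-1}{\mu}\right)^2\mu\right)$, followed by a calculus minimization at fixed ratio $r = \eta/\mu$ to show the exponential lies strictly below $r$. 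You instead use the exact identity $\E\left[X\,\mathbf{1}[X\le k]\right] = \sum_i p_i \Pr[X - X_i \le k-1]$ (conditioning on $X_i=1$ and invoking independence of $X_i$ from $X-X_i$), the pointwise monotonicity $X - X_i \le X$, and the trivial bound $\E\left[X\,\mathbf{1}[X\le k]\right] \le k$; this is entirely elementary, avoids concentration inequalities and case analysis, and yields strictness via a clean equality-case argument (equality forces $X \equiv k$, which is absurd). What the paper's heavier route buys is robustness to correlation: in the application, the $X_i$'s are produced by dependent rounding and are only \emph{negatively correlated} (Lemma~\ref{lem:depend}), and Chernoff bounds remain valid in that regime, so Lemma~\ref{lem:deviation-chernoff} covers the case actually needed in Section~\ref{sec:approval}, whereas your step $\Pr[X-X_i\le k-1 \mid X_i=1]=\Pr[X-X_i\le k-1]$ uses independence and would need to be upgraded to a ``$\ge$'' under negative association to serve the same purpose. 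Finally, your observation that the lemma implicitly requires $Y$ to be independent of $X$---with the counterexample $Y=\mathbf{1}[X=0]$ when $\E[X]\ge 1$---is correct and worth making explicit; the paper's own proof uses this independence tacitly in the same conditioning step you do.
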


Since we need each committee in the lottery to have size $K$, we combine this large deviation inequality with dependent rounding techniques~\cite{GandhiKS01} to preserve committee size. This rounding procedure can be plugged in as an oracle to the multiplicative weight update method~\cite{AroraHK}, yielding the desired running time for the computational problem of finding an $L$-stable lottery.   We believe the template of our proofs via large deviation inequalities and probabilistic method may find further applications in the emerging theory of fair allocations.

\paragraph{RoadMap.} We present the proof of Theorem~\ref{thm:main} in Section~\ref{sec:exist}. We split this proof into a generic rounding portion (Sections~\ref{sec:zero} and~\ref{sec:round}) and the subsequent part that is specific to {\sc Approval set} (Section~\ref{sec:approval}) and {\sc Ranking} (Section~\ref{sec:ranking}). We present an efficient implementation in Section~\ref{sec:efficient}. In Section~\ref{sec:det}, we consider the notion of deterministic stability. We show the existence of stable committees for $K = 3$, and show that the PAV rule cannot be better than $O(\sqrt{K})$-approximately stable.  We %finally make a detailed comparison to related notions of stability and fairness in Section~\ref{sec:related}, and
conclude with open questions in Section~\ref{sec:open}.

% !TEX root = main.tex

\subsection{Related Work}
\label{sec:related}
We now compare our notion of stability with other notions of stability and fairness extant in literature. We first note that there is a rich literature on using lotteries to achieve fairness, including the influential papers of Hylland and Zeckhauser~\citet{HyllandZ} and Bogomolnaia and Moulin~\citet{Moulin}. Our work is more in the spirit of the latter in the sense that our model does not require voters to specify cardinal utilities over committees.

\paragraph{The Lindahl Equilibrium.} Our notion of stability is inspired by the notion of \emph{core} in cooperative game theory and was first phrased in game theoretic terms by Scarf~\citet{scarfCore}. It has been extensively studied in public goods settings~\cite{lindahlCore,coreConjectureCounter,Fain2016}. Much of this literature considers convex preferences, which translates to voters having preferences over lotteries, and deviating to another lottery if their expected utility increases. In other words, for additive utility functions $u_v$, we say a lottery  $\Delta'$ blocks $\Delta$ if there is a coalition of voters of size at least $n \cdot |S'|/K$ such that for all voters $v$ in this coalition, we have:
\[ \E_{S' \sim \Delta'} \left[u_v(S') \right] \ge \E_{S \sim \Delta} \left[u_v(S) \right] \]
with at least one inequality strict. This notion builds on the seminal work of Foley~\citet{lindahlCore} on the Lindahl market equilibrium. In this equilibrium, each candidate is assigned a per-voter price. If the voters choose their utility maximizing allocation subject to spending a dollar, then (1) they all choose the same outcome; and (2) for each chosen candidate, the total money collected pays for that candidate. It can be shown via a fixed point argument that such an equilibrium pricing always exists when lotteries are allowed, and is a core outcome. Though this existence result is very general, holding for any compact convex set of preferences, it is not known how to compute such a core outcome efficiently even for the committee selection problem with ranking or approval set utilities. 

Our notion of stability coincides with the notion of core if no randomization is allowed, but our notion of stable lotteries differs from how randomization is used in the core. In a Lindahl equilibrium, the voter compares the expected utility from the lottery with the utility on deviation, while in our notion, the lottery is first realized and subsequently the voters who see higher utility will deviate. In a sense, our notion justifies to any coalition of candidates that they do not have enough support given the current lottery, while the Lindahl equilibrium justifies to each coalition of voters that their utility is Pareto-optimal. This difference is subtle, but makes the two notions incomparable -- the existence of one type of stable solution does not imply the other in any obvious way. 

Our approach has two key advantages. First, our notion only requires voters to specify ordinal preferences over committees, and not over lotteries. Note that for both {\sc approval set} and {\sc ranking} settings, preferences over committees do not automatically imply preferences over lotteries -- the latter must be explicitly specified via the choice of utility function. Secondly, our existence theorems, though specific to committee selection, have constructive proofs that lead to efficient algorithms (while fixed point arguments in general do not). For instance, we can compute a stable lottery for {\sc Ranking} in polynomial time, while we do not know how to compute a Lindahl equilibrium efficiently in this setting.

\paragraph{Proportional Fairness.} A different notion of stability is the following: A deviating coalition of voters gets to choose a committee of size $K$, but its utility is scaled down proportional to the size of the coalition. In other words, given a coalition of voters of size $\beta n$, a committee $S'$ of size $K$ is blocking to a lottery $\Delta$ if for all voters $v$ in the coalition, 
\[ \beta u_v(S') \ge \E_{S \sim \Delta} \left[u_v(S) \right] \]
with at least one inequality strict.  In this notion, it is relatively easy to show that maximizing the Nash product of voter utilities finds a stable lottery~\cite{Fain2016}; furthermore, the discrete analog of Nash welfare, proportional approval voting (PAV), finds an approximately stable solution for additive utilities~\cite{FainMS18}. Though this approach is computationally tractable, it explicitly needs voters to have cardinal utilities, and is otherwise incomparable to scaling down the committee size. Our notion is closer to how fairness has been thought about in the Economics literature on public goods, and to how it has been thought about in the multi-winner election literature. We have already delved into the former; we consider the latter next.

\paragraph{Justified Representation.} Deterministic stability in the {\sc approval set} setting logically implies a number of fairness notions considered in multi-winner election literature with approval set preferences, such as justified representation, extended justified representation~\cite{Brill}, and proportional justified representation~\cite{Sanchez,PJR2018}.  The idea behind all these proportional representation axioms is to define a notion of {\em cohesive} groups of agents that all approve a small set of candidates, and ensure that such groups of voters do not deviate, {\em i.e.}, that these groups are proportionally represented in the outcome. As mentioned above, justified representation is exactly the same as our $1$-stability. Similarly, in extended justified representation (EJR),  we only consider deviations by sub-populations of voters of size at least $\alpha n$, if they all approve the {\em same set} of $\lfloor \alpha K \rfloor$ candidates. Therefore, a committee $S$ satisfies EJR iff some voter in this sub-population approves at least $\lfloor \alpha K \rfloor$ in $S$. 

Unlike justified representation and its generalizations, stability is a general condition that holds for all coalitions of agents, not just those that are cohesive. We do pay a price for this generalization: our stability results are for lotteries of committees, and unlike justified representation and EJR, our algorithm for computing stable lotteries runs in polynomial time only when the deviating committee has constant size.  As noted before, it is an open question whether (deterministic) stable committees exist in the {\sc approval set} setting.

%For justified representation and its generalizations, it was shown in~\cite{Brill} that proportional approval voting (PAV)~\cite{Thiele,PJR2018} indeed achieves proportional representation. A scoring rule assigns a monotone {\em score} for each voter based on how much its approval set overlaps with the committee. The winning committee maximizes the sum of the scores for the voters. They show that PAV is the only scoring rule satisfying justified representation. In contrast, PAV and other scoring rules fail to satisfy stability~\cite{Brill}, which leads us to define the randomized notion of stability and devise existence proofs from scratch.

\paragraph{Committee Scoring Rules.} Elkind {\em et al.}~\citet{ElkindFSS17}  show that given preferences of voters over candidates, a large class of voting rules for committee selection can be expressed as follows: For each voter, sort the positions of the committee members in its own ranking in non-decreasing order. Now apply a monotone scoring function to this vector of positions, and add up this score for all voters. The winning committee maximizes this score. This can be interpreted as assigning a utility function for each voter for each committee, and finding the committee that maximizes the sum of utilities over all voters, {\em i.e.}, the social optimum. In the {\sc ranking} setting, the voter derives utility only from the most preferred candidate in the committee; in fact, the Chamberlin-Courant voting rule~\citet{Chamberlain}  sums over all the voters, the Borda score of that voter's highest ranked candidate in the committee, and chooses the committee with the highest score. Other voting rules such as $k$-Borda assume more general submodular utilities, and it is an open question whether stable lotteries exist when voters' utilities are submodular. 

Elkind {\em et al.}~\citet{ElkindFSS17}  also show that for several natural scoring functions including Chamberlin-Courant, the resulting voting rule is  {\sc NP-Hard}.  Further, since these voting rules are deterministic, none of them can produce a stable outcome.  Our contribution is to show that for the {\sc ranking} setting,  stable {\em lotteries} always exists and can be computed in polynomial time. 

%\paragraph{Fairness via Lotteries.} There is Our work is more in the spirit of the latter: In our model, voters could specify preferences that are consistent with many possible underlying cardinal utilities, and do not need to reason about utilities in a lottery. 

% !TEX root = main.tex

\renewcommand{\S}{\mathcal{S}}
\section{Existence of Stable Lotteries}
\label{sec:exist}
In this section, we will prove Theorem~\ref{thm:main}. We will do this in a sequence of steps. First, we will formulate stability as a zero-sum game, and take its dual. Next, we will devise a rounding procedure to show the dual has negative value, which will imply the existence of a stable solution. Though the rounding procedure is the same for both {\sc Approval set} and {\sc ranking} settings, we need different proofs of existence for each. In Section~\ref{sec:efficient}, we show that the rounding procedure also gives an efficient algorithm for finding an $\epsilon$-approximately stable lottery via the multiplicative weight method.

\subsection{Dual Formulation of Stability}
\label{sec:zero}
Recall the definition of a stable lottery from Definition~\ref{def:lottery}. The existence of a stable lottery is equivalent to showing that:
\begin{equation}
\label{eq:main}
 \min_{\Delta} \max_{S'} \left( \E_{S \sim \Delta} \left[ V(S, S') \right] - n \frac{|S'|}{K} \right) < 0
 \end{equation}
where $\Delta$ is a lottery over committees of size $K$.

We formulate a zero-sum game with an attacker and defender. The goal of the defender is to choose a stable committee, while the goal of the attacker is to find a blocking committee. The defender's pure strategies are committees $S_d$ of size {\em exactly} $K$, and the attacker's pure strategies are committees $S_a$ of size {\em at most} $K$. Denote the former set of strategies as $\S_d$ and the latter by $\S_a$.  Given a pair of pure strategies $(S_d,S_a)$ the payoff to the attacker is:
\[ Q(S_d, S_a) = V(S_d, S_a) - n \frac{|S_a|}{K} \; . \]
and the defender tries to minimize this value.

Therefore, the LHS of Equation (\ref{eq:main}) is the value of the game if the defender goes first and chooses mixed strategy $\Delta_d$, and the attacker subsequently chooses strategy $S_a$ to maximize the value of the game. Therefore,  the existence of a stable lottery is equivalent to asking whether the value of this zero-sum game is negative.

\paragraph{Dual Formulation.} Suppose the attacker goes first and chooses a lottery $\Delta_a$ over $\S_a$. Subsequently, the defender chooses a committee $S_d \in \S_d$ to minimize the value of the game. Using duality in zero-sum games,
\[ \min_{\Delta_d} \ \max_{S_a \in \S_a} \E_{S_d \sim \Delta_d} \left[Q(S_d, S_a) \right] = \max_{\Delta_a} \ \min_{S_d \in \S_d} \E_{S_a \sim \Delta_a} \left[Q(S_d, S_a) \right] \; . \]

Define $\beta = \frac{\E_{S_a \sim \Delta_a} [|S_a|]}{K}$.
Note that $\beta \in (0, 1]$ because $\Delta_a$ is a lottery over $\S_a$, {\em i.e.}, over committees of size at most $K$.
Then, we have:
\[ \E_{S_a \sim \Delta_a} \left[ V(S_d, S_a) - n \frac{|S_a|}{K}  \right] = \sum_{v \in \N} \left( \Pr_{S_a \sim \Delta_a} \left[ S_a \succ_v S_d \right] \right) - \beta n = \sum_{v \in \N} \left( \Pr_{S_a \sim \Delta_a} \left[ S_a \succ_v S_d \right]  - \beta \right) \; . \]

Therefore, the value of the game is:
\begin{equation}
\label{eq:main2}
\max_{\Delta_a} \ \min_{S_d \in \S_d} \E_{S_a \sim \Delta_a} \left[Q(S_d, S_a) \right] =  \max_{\Delta_a} \ \min_{S_d \in \S_d} \left( \sum_{v \in \N} \left( \Pr_{S_a \sim \Delta_a} \left[ S_a \succ_v S_d \right]  - \beta \right) \right) \; .
\end{equation}

In order to prove the existence part of Theorem~\ref{thm:main}, we need to show that the RHS of the above identity is negative for the {\sc approval set} and {\sc ranking} settings.

\subsection{Defender's Strategy: Probability Matching}
\label{sec:round}
Fix any mixed strategy $\Delta_a$ for the attacker. Let $\beta = \frac{\E_{S_a \sim \Delta_a} [|S_a|]}{K}$. We will construct a mixed defending strategy $\Delta_d$ over $\S_d$, {\em i.e.} over committees of size $K$, so that
\begin{equation}
\label{eq:main3}
 \forall v \in \N \qquad \Pr_{S_a \sim \Delta_a, S_d \sim \Delta_d} \left[ S_a \succ_v S_d \right] < \beta \; .
\end{equation}
Equations (\ref{eq:main2}) and (\ref{eq:main3}) together show the existence of a defending strategy $S_d$ that makes the attacker's payoff negative:
\[
 \min_{S_d \in \S_d} \E_{S_a \sim \Delta_a} \left[Q(S_d, S_a) \right]
 \le \E_{S_d \sim \Delta_d} \E_{S_a \sim \Delta_a} \left[Q(S_d, S_a) \right]
 = \sum_{v \in \N} \left( \Pr_{S_a \sim \Delta_a, S_d \sim \Delta_d} \left[ S_a \succ_v S_d \right]  - \beta \right) < 0 \; .
\]
Because such $S_d$ exists for any mixed attacking strategy $\Delta_a$, the value of the game is negative, and hence a stable lottery always exists.

\paragraph{Probability Matching.}
We will construct the mixed defending strategy using {\em probability matching}. For candidate $i \in \C$, let random variable $Y_i = \mathbf{1}_{i \in S_a}$ be the indicator variable for the event $i \in S_a$ when $S_a \sim \Delta_a$, and let $p_i = \E[Y_i] = \Pr_{S_a \sim \Delta_a}[i \in S_a]$. Note that the inclusion of different candidates in the attacking strategy can be correlated, {\em i.e.}, the $Y_i$'s can be correlated.

Let $q_i = \min(1,p_i/\beta)$.  The defender's strategy will include $i \in S_d$ with probability at least $q_i$, {\em i.e.}, the defender will {\em probability match} the attacker for each candidate $i \in \C$. We first make a simple observation.

\begin{claim}
$\sum_{i \in \C} q_i \le K$.
\end{claim}
\begin{proof}
The expected size of the attacker committee is $\beta K$, so $\sum_i p_i = \beta K$. Therefore,
\[ \sum_i q_i = \sum_i \min(1,p_i/\beta) \le \sum_i p_i/\beta = K \; . \qedhere \]
\end{proof}

\paragraph{Randomized Dependent Rounding.}
%Consider the uniform matroid where the elements are the candidates, and the matroid constraint corresponds to choosing at most $K$ candidates. For a base $S$ (committee of size $K$), let $\vec{1}_{S}$ denote the $m$-dimensional characteristic vector of $S$ where the $i$\textsuperscript{th} coordinate is $1$ if $i \in S$ and $0$ otherwise. Increase $\vec{q}$ arbitrarily to $\vec{\alpha}$ so that $\sum_i \alpha_i = K$ and $\alpha_i \in [q_i,1]$ for all $i \in \C$.  Write $\vec{\alpha}$ arbitrarily as a lottery $\vec{\Gamma}$ over bases, that is,  $\vec{\alpha} = \sum_{S, |S| = K} \Gamma_S \vec{1}_S$ where $\sum_S \Gamma_S = 1$.
%
%\medskip
We first increase $\vec{q}$ arbitrarily to $\vec{\alpha}$ so that $\sum_i \alpha_i = K$ and $\alpha_i \in [q_i,1]$ for all $i \in \C$.
We will construct a mixed defending strategy $\Delta_d$ over committees $S_d$ of size $K$ by performing {\em randomized dependent rounding}~\cite{GandhiKS01} on $\vec{\alpha}$.
Let random variable $X_i$ be the indicator variable for the event $i \in S_d$ when $S_d \sim \Delta_d$.
We omit the details of this rounding procedure, and instead summarize its properties in the following lemma:

\begin{lemma} [Gandhi {\em et al.}~\cite{GandhiKS01}]
\label{lem:depend}
The distribution $\Delta_d$ obtained by performing randomized dependent rounding on $\vec{\alpha}$ satisfies the following:
\begin{enumerate}
\item $\Delta_d$ is a distribution over committees of size $K$, {\em i.e.}, for every $S_d \sim \Delta_d$ we have $|S_d| = K$.
\item $\Pr[X_i = 1] = \alpha_i$, {\em i.e.}, the marginal probability of choosing candidate $i$ is preserved.
\item The random variables $X_i$ are negatively correlated.  More specifically, for every subset $S \subseteq \C$,
\[ \Pr\left[\bigwedge_{i \in S} X_i = b\right] \le \prod_{i \in S} \Pr[X_i = b] \quad \forall b \in \{0, 1\} \; . \]
%\item For any submodular set function $f$ over committees, let $F(\vec{\alpha})$ denote the expected value of $f$ if candidate $i$ is chosen independently with probability $\alpha_i$. That is,
%\[ F(\vec{\alpha}) = \sum_{S \in 2^{\C}} f(S) \prod_{i\in S} \alpha_i \prod_{i \notin S} (1-\alpha_i) \; . \]
%Then,
%\[ \E_{S \sim \Delta_d} \left[f(S) \right] \ge F(\vec{\alpha}) \; . \]
\end{enumerate}
\end{lemma}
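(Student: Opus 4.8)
The plan is to treat \emph{randomized dependent rounding} as a sequence of elementary two-coordinate rounding steps and to verify the three properties as invariants maintained across these steps. Properties~1 and~2 will fall out of two conservation laws of a single step, while property~3 (negative correlation) is where the real work lies.

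I would first specify the elementary step. Given the current fractional vector $\vec{\alpha}$, pick any two coordinates $i,j$ that are both strictly fractional. Let $\beta_1 = \min(1-\alpha_i,\,\alpha_j)$ and $\beta_2 = \min(\alpha_i,\,1-\alpha_j)$, and replace the pair by $(\alpha_i+\beta_1,\,\alpha_j-\beta_1)$ with probability $\tfrac{\beta_2}{\beta_1+\beta_2}$ and by $(\alpha_i-\beta_2,\,\alpha_j+\beta_2)$ with probability $\tfrac{\beta_1}{\beta_1+\beta_2}$, leaving all other coordinates fixed. Two facts are immediate: the step preserves the sum $\alpha_i+\alpha_j$ exactly, and it is unbiased, i.e.\ $\E[\alpha_i \mid \vec{\alpha}]$ equals its pre-step value (the probabilities are chosen precisely to cancel the two displacements). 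Moreover at least one of the two coordinates lands in $\{0,1\}$, so the number of fractional coordinates strictly decreases; since $\sum_i\alpha_i=K$ is an integer it is impossible to be left with a single fractional coordinate, so the process terminates after at most $|\C|$ steps with an integral vector $\vec{X}\in\{0,1\}^{\C}$. Property~1 now follows because each step conserves the global sum $\sum_i \alpha_i = K$, so the terminal vector has exactly $K$ ones, i.e.\ $|S_d| = K$ with probability one. Property~2 follows from unbiasedness by the tower property: iterating $\E[\alpha_i \mid \text{current state}] = \alpha_i$ over all steps yields $\E[X_i] = \alpha_i$, hence $\Pr[X_i = 1] = \alpha_i$.

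The crux is property~3, which I would establish by induction on the step count. Fix $S\subseteq\C$ and define the potential $\Phi_S^{(t)} = \E\big[\prod_{i\in S}\alpha_i^{(t)}\big]$; the claim is that $\Phi_S^{(t)}$ is non-increasing in $t$. Condition on the state before a step and on the touched pair $(i,j)$. If $S$ contains at most one of $i,j$, the conditional expected product is unchanged, since the remaining factors are frozen and the single touched factor is unbiased. If $\{i,j\}\subseteq S$, write $s = \alpha_i+\alpha_j$ (preserved by the step) and let $\tilde\alpha_i,\tilde\alpha_j$ denote the post-step values; then
\[ \E\!\left[\tilde\alpha_i\,\tilde\alpha_j \,\middle|\, \vec{\alpha}\right] = \E\!\left[\tilde\alpha_i\,(s-\tilde\alpha_i)\,\middle|\,\vec{\alpha}\right] = s\,\alpha_i - \E\!\left[\tilde\alpha_i^{\,2}\,\middle|\,\vec{\alpha}\right] \le s\,\alpha_i - \alpha_i^2 = \alpha_i\,\alpha_j, \]
where the inequality is just $\E[\tilde\alpha_i^{\,2}] \ge (\E[\tilde\alpha_i])^2 = \alpha_i^2$. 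Since the factors $\prod_{k\in S\setminus\{i,j\}}\alpha_k$ are frozen, the conditional expected product does not increase, and taking the outer expectation gives $\Phi_S^{(t+1)}\le\Phi_S^{(t)}$. At termination $\prod_{i\in S} X_i = \mathbf{1}[\bigwedge_{i\in S} X_i = 1]$, so by property~2,
\[ \Pr\!\left[\bigwedge_{i\in S} X_i = 1\right] = \Phi_S^{(T)} \le \Phi_S^{(0)} = \prod_{i\in S}\alpha_i = \prod_{i\in S}\Pr[X_i = 1]. \]
The $b=0$ case is identical after noting that $(1-\alpha_i)+(1-\alpha_j) = 2-s$ is also preserved, so the same one-line variance bound applies to the variables $1-\alpha_i$.

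The main obstacle is precisely this monotonicity of the subset-product potential; everything reduces to the scalar inequality $\E[Z^2]\ge(\E Z)^2$ once one recognizes the elementary step as a mean-preserving spread of $\alpha_i$ along the line $\alpha_i+\alpha_j=\text{const}$. The rest is bookkeeping: termination (each step integralizes a coordinate), irrelevance of the rule used to pick $(i,j)$ (we condition on it), and checking that the three guarantees are exactly what the downstream argument consumes. Since this is the rounding theorem of Gandhi \emph{et al.}~\cite{GandhiKS01}, one may alternatively cite it as a black box; the above is how I would reconstruct the proof.
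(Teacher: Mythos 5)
Your proof is correct, but it is worth noting that the paper does not prove this lemma at all: it explicitly writes ``We omit the details of this rounding procedure'' and imports all three properties as a black box from Gandhi \emph{et al.}~\cite{GandhiKS01}, which your final sentence acknowledges is a legitimate alternative. What you have written is essentially a faithful reconstruction of the original argument of~\cite{GandhiKS01}: the two-coordinate pair-rounding step with displacements $\beta_1,\beta_2$ and probabilities chosen to be unbiased; conservation of $\sum_i \alpha_i = K$ plus integrality of $K$ to get committees of size exactly $K$ (Property~1); the martingale/tower-property argument for preservation of marginals (Property~2); and, for Property~3, monotonicity of the subset-product potential $\Phi_S^{(t)} = \E\bigl[\prod_{i \in S}\alpha_i^{(t)}\bigr]$, which reduces via $\tilde\alpha_j = s - \tilde\alpha_i$ to the scalar inequality $\E[\tilde\alpha_i^{\,2}] \ge (\E[\tilde\alpha_i])^2$, with the symmetric treatment of $1-\alpha_i$ handling the $b=0$ case. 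All of these steps check out, including the termination argument (each step integralizes at least one coordinate, and a single fractional coordinate cannot remain since the sum is an integer) and the observation that the pair-selection rule is irrelevant because one conditions on it. The trade-off between the two routes is the usual one: the citation keeps the paper focused on its genuinely new content (the Probability Matching Lemma and the game-theoretic reduction), while your self-contained proof verifies that all three properties hold simultaneously for a single construction and makes transparent the $O(m^2)$-time implementability that the paper later relies on in its {\sc Oracle} claim in Section~\ref{sec:efficient}.
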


This completes the generic portion of the proof that there exists a stable lottery over committees.
It remains to show that the defender's lottery $\Delta_d$ we constructed satisfies Inequality~(\ref{eq:main3}).
We will show this separately for the {\sc Approval set} and {\sc ranking} settings.

\subsection{Proof of Theorem~\ref{thm:main} for {\sc Approval Set} Setting}
\label{sec:approval}
We will now show Inequality~(\ref{eq:main3}) holds for the {\sc Approval set} setting.  As we show later, this statement is a stronger version of the Probability Matching Lemma (Lemma~\ref{lem:pm}), where we replace independence of $\{X_i\}$ by negative dependence.

Fix any voter $v \in \N$ with approval set $A_v$. By the definition of {\sc Approval Set}:
\[ \Pr_{S_a \sim \Delta_a, S_d \sim \Delta_d} \left[ S_a \succ_v S_d \right] = \Pr_{S_a \sim \Delta_a, S_d \sim \Delta_d} \left[ |S_a \cap A_v| > |S_d \cap A_v| \right] \; . \]
Suppose $A_v = \{c_1, c_2, \ldots, c_{\ell}, c_{\ell+1}, \ldots, c_r\}$. We simplify notation to denote $\alpha_i = \alpha_{c_i}$. These candidates are ordered so that $\alpha_i$'s are in ascending order. Further, $\alpha_{\ell} < 1$ and $\alpha_{\ell+1} = \alpha_{\ell+2} = \cdots = \alpha_r = 1$. Since $c_{\ell+1}, \ldots, c_r$ are selected in $S_d$ with probability $1$,
\[
\Pr\left[|S_d \cap A_v| < |S_a \cap A_v|\right] \leq \Pr\left[|S_d \cap \{c_1, \ldots, c_{\ell}\}| < |S_a \cap \{c_1, \ldots, c_{\ell}\}|\right] \; .
\]
Defining $T_d := S_d \cap \{c_1, \ldots, c_{\ell}\}$ and $T_a := S_a \cap \{c_1, \ldots, c_{\ell}\}$, we have
\begin{equation}
\label{eq:2} \Pr_{S_a \sim \Delta_a, S_d \sim \Delta_d} \left[ S_a \succ_v S_d \right] \le \Pr\left[|T_d| < |T_a|\right] \; .
\end{equation}
In order to show Inequality~(\ref{eq:main3}), our goal is therefore to show that $\Pr\left[|T_d| < |T_a|\right] < \beta$.

\medskip

% We will prove this by considering all possible sizes $\eta$ of $T_a$, and using Chernoff bound to upper bound the probability that $|T_d| < \eta$ (Lemma~\ref{lem:deviation-chernoff}).

Recall that $Y_i$ is the indicator random variable for whether $c_i \in S_a$ when $S_a \sim \Delta_a$, and similarly, $X_i$ is the indicator random variable for whether $c_i \in S_d$ when $S_d \sim \Delta_d$.
The $Y_i$'s can be correlated and $E[Y_i] = p_i$.
For $i \in \{1,2,\ldots, \ell\}$, by Lemma~\ref{lem:depend}, we have $\E[X_i] = \alpha_i \ge p_i/\beta$ and the $X_i$'s are negatively correlated.
Define random variables $X = \sum_{i=1}^\ell X_i$ and $Y = \sum_{i=1}^\ell Y_i$.

Using these notations, our goal is to prove
%\[ \Pr\left[|T_d| < |T_a|\right] = \Pr \left[ \sum_{i=1}^\ell X_i < \sum_{i=1}^\ell Y_i \right] \le \beta \; . \]
\[ \Pr\left[|T_d| < |T_a|\right] = \Pr \left[ X < Y \right] < \beta \; . \]

The next lemma uses Chernoff bound to upper bound the probability that $X < \eta$.

\begin{lemma}
\label{lem:deviation-chernoff}
Let $X = \sum_{i=1}^\ell X_i$ and $\mu = \E[X]$.
For any integer $\eta \ge 1$, $\Pr[X < \eta] < \frac{\eta}{\mu}$.
\end{lemma}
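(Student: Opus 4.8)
The plan is to bound the lower tail of $X$ by the Chernoff method and then verify that the resulting exponential estimate is dominated by the linear quantity $\eta/\mu$. First I would dispose of the trivial regime: if $\eta > \mu$ then $\eta/\mu > 1 \ge \Pr[X<\eta]$, so there is nothing to prove and I may assume $\mu \ge \eta$ henceforth.

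For the main regime, fix $s \in (0,1)$ and use monotonicity of $x \mapsto s^x$ to write $\Pr[X < \eta] = \Pr[X \le \eta-1] = \Pr[s^X \ge s^{\eta-1}] \le s^{1-\eta}\,\E[s^X]$ by Markov's inequality. The key step is to factorize $\E[s^X] = \E[\prod_i s^{X_i}]$ despite the dependence among the $X_i$. Since the $X_i$ are only negatively correlated (Lemma~\ref{lem:depend}), the naive expansion using $s^{X_i} = 1 - (1-s)X_i$ has terms of alternating sign $(-(1-s))^{|S|}$, so the one-directional correlation bound does not apply directly. Instead I would pass to the complementary indicators $Z_i = 1 - X_i$: then $s^X = s^\ell (1/s)^{Z}$ with $Z = \sum_i Z_i$, and because $1/s > 1$ the expansion of $\E[(1/s)^Z] = \E[\prod_i (1+(1/s-1)Z_i)]$ has only nonnegative coefficients. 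Hence the lower-tail negative correlation $\Pr[\bigwedge_{i\in S} X_i = 0]\le\prod_{i\in S}\Pr[X_i=0]$ (the $b=0$ case of Lemma~\ref{lem:depend}) yields $\E[s^X] \le \prod_i \E[s^{X_i}] = \prod_i(1-\alpha_i(1-s)) \le e^{-\mu(1-s)}$, giving $\Pr[X<\eta] \le s^{1-\eta}e^{-\mu(1-s)}$ for every $s \in (0,1)$.

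Next I would optimize the free parameter. For $\eta \ge 2$ the right-hand side is minimized at $s = (\eta-1)/\mu$, which lies in $(0,1)$ precisely because $\mu \ge \eta$, producing the bound $(\mu/(\eta-1))^{\eta-1} e^{(\eta-1)-\mu}$; for $\eta = 1$, letting $s \to 0$ gives the bound $e^{-\mu}$.

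The main obstacle is the final analytic step of checking that this optimized bound is \emph{strictly} smaller than $\eta/\mu$. Writing $k = \eta - 1$, this amounts to showing $\frac{\mu^{k+1}}{k^k(k+1)}e^{k-\mu} < 1$ for all $\mu \ge k+1$. I would treat the left-hand side as a function of $\mu$ and observe that $\mu^{k+1}e^{-\mu}$ has derivative $\mu^k e^{-\mu}(k+1-\mu) \le 0$ on $[k+1,\infty)$, hence is decreasing there, so it suffices to evaluate at $\mu = \eta = k+1$, where the expression collapses to $(1+1/k)^k e^{-1}$; the elementary inequality $(1+1/k)^k < e$ then closes the argument. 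The $\eta = 1$ case reduces to $e^{-\mu} < 1/\mu$, i.e. $\mu < e^\mu$, which is immediate. Since the Markov step is an inequality while the concluding bound $(1+1/k)^k < e$ is strict, the overall estimate is strict, as required.
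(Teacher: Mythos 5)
Your proof is correct, and it shares the paper's skeleton---dispose of the trivial regime, apply the Chernoff method to the lower tail of $X$ under negative correlation, then verify analytically that the exponential bound is dominated by $\eta/\mu$---but both halves are executed genuinely differently. For the tail bound, the paper invokes the packaged inequality $\Pr[X \le (1-\delta)\mu] \le \exp(-\delta^2 \mu/2)$ for negatively correlated variables as a black box (citing Panconesi--Srinivasan and Gandhi et al.), whereas you rederive the factorization $\E[s^X] \le \prod_i \E[s^{X_i}]$ directly from the $b=0$ case of Lemma~\ref{lem:depend}; your observation that the naive expansion of $\prod_i\left(1-(1-s)X_i\right)$ has alternating signs, so one must pass to the complementary indicators $Z_i = 1 - X_i$ to obtain nonnegative coefficients, is exactly the right fix for a subtlety the paper never surfaces, and it makes your argument self-contained. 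For the endgame, you optimize the parameter to get the sharper bound $\left(\mu/(\eta-1)\right)^{\eta-1} e^{(\eta-1)-\mu}$ and close with monotonicity of $\mu^{k+1}e^{-\mu}$ on $[k+1,\infty)$ together with $(1+1/k)^k < e$, a single clean case; the paper instead keeps the weaker squared-deviation form $\exp\left(-\frac{1}{2}\left(1-\frac{\eta-1}{\mu}\right)^2\mu\right)$, fixes the ratio $r = \eta/\mu$, and minimizes $(1-r+1/\mu)^2\mu$ over $\mu$ with a two-case analysis ($r \ge 1/2$ versus $r \le 1/2$). Your route buys a tighter intermediate bound, a uniform final argument, and independence from the cited concentration lemma; the paper's buys brevity. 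Your accounting of strictness (Markov gives only $\le$, but $(1+1/k)^k < e$ is strict, and the trivial regime $\eta > \mu$ gives $\eta/\mu > 1$) is also sound, so the strict inequality claimed in the lemma does follow.
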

\begin{proof}
The statement holds trivially when $\eta \ge \mu$, so w.l.o.g. we can assume $1 \le \eta < \mu$.
Because $\eta$ is an integer, $\Pr[X < \eta] = \Pr[X \le \eta-1]$.

By Lemma~\ref{lem:depend}, the $X_i$'s produced by randomized dependent rounding \cite{GandhiKS01} is negatively correlated, so we can apply Chernoff bound to the $X_i$'s~\cite{PanconesiS97,GandhiKS01}:
\[ \Pr[X \le \eta-1] = \Pr\left[X \le \left(\frac{\eta-1}{\mu}\right) \mu\right] \le \exp\left(-\frac{1}{2} \left(1-\frac{\eta-1}{\mu}\right)^2 \mu \right) \; .\]

It remains to show that the RHS is less than $\frac{\eta}{\mu}$ for any $\mu > 0$ and any integer $1 \le \eta < \mu$.
One way to do so is to fix the ratio $r = \frac{\eta}{\mu} \in (0, 1)$ and first minimize the RHS over $\mu = \frac{\eta}{r} \ge \frac{1}{r}$.

Let $f(\mu) = (1-r+\frac{1}{\mu})^2 \mu$.
We can compute its derivative: $f'(\mu) = (1-r)^2 - \frac{1}{\mu^2}$.
Thus, $f(\mu)$ is decreasing when $0 < \mu \le \frac{1}{1-r}$ and increasing when $\mu \ge \frac{1}{1-r}$. %, and in particular, the function is minimized at $f(\frac{1}{1-r}) = 4(1-r)$.

When $r \ge \frac{1}{2}$, we have $\min_{\mu \ge \frac{1}{r}} f(\mu) = f(\frac{1}{1-r}) = 4(1-r)$, which implies $\mathrm{RHS} \le \exp\left(-\frac{4(1-r)}{2}\right) < r$.
When $r \le \frac{1}{2}$, we have $\frac{1}{r} \ge \frac{1}{1-r}$, so $\min_{\mu \ge \frac{1}{r}} f(\mu) = f(\frac{1}{r}) = \frac{1}{r}$ and $\mathrm{RHS} \le \exp\left(-\frac{1}{2r}\right) < r$.
\end{proof}

Given Lemma~\ref{lem:deviation-chernoff}, we can prove Inequality~\eqref{eq:main3} by considering all possible values of $Y$:
%\begin{align*}
%\Pr[\sum_{i=1}^\ell X_i < \sum_{i=1}^\ell Y_i]
% &= \sum_{\eta=1}^\ell \Pr[\sum_{i=1}^\ell Y_i = \eta] \cdot \Pr[\sum_{i=1}^\ell X_i < \eta] \\
% &\le \sum_{\eta=1}^\ell \Pr[\sum_{i=1}^\ell Y_i = \eta] \frac{\eta}{\E[\sum_{i=1}^\ell X_i]} \\
% &= \frac{\E[\sum_{i=1}^\ell Y_i]}{\E[\sum_{i=1}^\ell X_i]} \le \beta \; .
%\end{align*}
\[ \Pr[X < Y]
 = \sum_{\eta=1}^\ell \Pr[Y = \eta] \cdot \Pr[X < \eta]
 < \sum_{\eta=1}^\ell \Pr[Y = \eta] \frac{\eta}{\E[X]}
 = \frac{\E[Y]}{\E[X]} \le \beta \; . \]

The last step follows from the fact that $\E[X] = \sum_{i=1}^\ell \alpha_i \ge \sum_{i=1}^\ell p_i / \beta = \E[Y] / \beta$. This completes the proof that a stable lottery exists in the {\sc Approval Set} setting.
%\end{proof}

%Recall that $X_i$ is the indicator variable denoting whether $c_i$ belongs to the committee chosen according to $\Delta_d$. Similarly, let $Y_i$ denote the corresponding indicator variable for $\Delta_a$. Clearly,
%\[ \Pr_{S_a \sim \Delta_a, S_d \sim \Delta_d} \left[|T_d| < |T_a|\right]  = \Pr \left[ \sum_{i=1}^{\ell} Y_i > \sum_{i=1}^{\ell} X_i  \right] \; . \]
\paragraph{Discussion.} %Since the $X_i$'s are negatively correlated and the $Y_i$'s can be arbitrarily correlated,
The above proof also proves Lemma~\ref{lem:pm}.
Recall that Lemma~\ref{lem:pm} states that
\[ \Pr \left[ X < Y \right] < \beta \;  \]
where $X$ is the sum of independent Bernoulli random variables and $Y$ is a non-negative integer random variable. %Now, $\E[Y_i] = \beta \E[X_i]$ for $i \in \{1,2,\ldots, \ell\}$,
Now since $\E[Y] \leq \beta\E[X]$, a simple application of Markov's inequality yields:
\[ \Pr \left[ \E[X] < Y \right] < \beta \; . \]
However, this is not quite what we want. The quantity $X = \sum_i X_i$  is the sum of independent Bernoulli random variables, and will deviate below its expectation. The proof in Section~\ref{sec:approval} varies the threshold $\eta$, and
uses Chernoff bound to upper bound the probability that $X$ deviates below $\eta$.

The probability matching approach chooses candidates in $S_d$ ({\em i.e.}, the variables $X_i$) proportional to the marginals of the correlated distribution $\Delta_a$ ({\em i.e.}, the variables $Y_i$). This is similar to the randomized contention resolution scheme for approximately maximizing submodular welfare~\cite{FeigeV}. However, this resemblance is superficial. First, we do not round with probability equal to the marginal, but instead need to scale it up by $1/\beta$. More importantly, the analysis for submodular welfare focuses on achieving a constant approximation against an LP bound, {\em i.e.}, they show that when $\beta = 1$, then $\E[f(\vec{X})] \ge (1-1/e) \E[f(\vec{Y})]$ for submodular function $f$. On the other hand, we show a more delicate statement $\Pr \left[ \sum_{i=1}^{\ell} Y_i > \sum_{i=1}^{\ell} X_i  \right] < \beta$. In fact, we can modify the tight example in~\cite{FeigeV} to show that for this specific rounding procedure, the statement $\Pr \left[ f(\vec{Y}) > f(\vec{X})  \right] < \beta$ is {\em false} for arbitrary submodular functions $f$. This implies our probability matching procedure {\em cannot} be used to show the existence of stable lotteries when utilities of voters for committees are submodular. It is an open question to extend this proof to the case where the $X_i \sim $ Bernoulli$(s_i,\alpha_i)$, which corresponds to the voters having arbitrary additive utilities for candidates. %, and this will require new ideas.

%\medskip
%Finally, one can wonder if the game we have defined has an optimum where the defender's strategy is deterministic, since this would show the existence of stable committees (without randomization). It is easy to check this is not the case: When there are three voters and three candidates, each voter approves only one distinct candidate, and $K = 2$, the value of the game is different when the defender is restricted to deterministic strategies: Allowing lotteries leads to a value of $-1.167$, while only allowing deterministic strategies leads to value $-0.5$. Note that both values are negative, so this does not disprove the existence of stable committees. We discuss stable committees in {\sc Approval} setting further in Section~\ref{sec:det}.

\subsection{Proof of Theorem~\ref{thm:main} for {\sc Ranking} Setting}
\label{sec:ranking}
We now prove Inequality~(\ref{eq:main3}) for the {\sc Ranking} setting. Fix any voter $v \in \N$. Assume the candidates are ordered so that $c_1 \succ_v c_2 \succ_v \cdots \succ_v c_m$.  Let $X_i$ denote the indicator random variable that the defender's committee $S_d \sim \Delta_d$ includes $c_i$, and let $Y_i$ denote the indicator random variable that the attacker's committee $S_a \sim \Delta_a$ includes $c_i$.

Recall that the $X_i$'s are negatively correlated and $Y_i$'s can be arbitrarily correlated, $\E[Y_i] = p_i$, and $\E[X_i] = \alpha_i \ge \min(1, p_i / \beta)$.
Let $c_\ell$ denote the earliest (highest ranked) candidate in this ordering for whom $\alpha_{\ell} = 1$, {\em i.e.}, $X_{\ell} = 1$ with probability $1$. Note that for $1 \le j < \ell$, we have $\alpha_j < 1$ and thus $\alpha_j \ge p_j / \beta$.

Let $c_j$ be the highest ranked candidate picked by the attacker.
That is, $j$ is a random variable which is the smallest index with $Y_j = 1$.
Then, $S_a \succ_v S_d$ iff $X_1 = X_2 = \cdots = X_j = 0$, {\em i.e.}, the defender fails to pick any candidate that ranks at least as high as $j$.

Because the defender will always pick $c_\ell$, for $S_a \succ_v S_d$ to happen, the attacker's highest ranked candidate has to appear before $c_\ell$ ({\em i.e.}, $j < \ell$).
Therefore,
\begin{eqnarray*}
 \Pr[S_a \succ_v S_d] & = & \sum_{j=1}^{\ell-1} \Pr\left[ Y_1 = Y_2 = \cdots = Y_{j-1} = 0 \mbox{ and } Y_j = 1  \mbox{ and } X_1 = X_2 = \cdots = X_j = 0 \right] \\
 & \le & \sum_{j=1}^{\ell-1}  \Pr\left[ Y_j = 1  \mbox{ and } X_1 = X_2 = \cdots = X_j = 0 \right] \; .
  \end{eqnarray*}
Since $\{X_i\}$ is independent of $\{Y_i\}$, we have:
 \begin{eqnarray*}
 \Pr[S_a \succ_v S_d] & \le &  \sum_{j=1}^{\ell-1} \Pr\left[ Y_j = 1\right] \cdot \Pr\left[ X_1 = X_2 = \cdots = X_j = 0 \right] \\
 & \le & \sum_{j=1}^{\ell-1} \Pr\left[ Y_j = 1\right] \cdot \Pr[X_1 = 0] \Pr[ X_2 = 0] \cdots  \Pr[X_j = 0] \\
 & \le & \sum_{j=1}^{\ell-1} \beta \alpha_j \prod_{i=1}^j (1-\alpha_i) \; .
   \end{eqnarray*}
Here, the second inequality follows from the negative correlation of $\{X_i\}$.
More precisely, this inequality is a special case of the negative correlation condition of Lemma~\ref{lem:depend}.
The final inequality follows since $\E[Y_i] = p_i \le \beta \alpha_i$, and $\E[X_i] = \alpha_i$.

Consider the following stopping process: Consider candidates in order $c_1, c_2, \ldots, c_{\ell}$. We stop at candidate $c_1$ with probability $\alpha_1$; if we do not stop, we stop at $c_2$ independently with probability $\alpha_2$, and so on.
Because $\alpha_j < 1$ for all $1 \le j < \ell$, we stop with probability less than $1$ before we reach $c_{\ell}$:
\[ \sum_{j=1}^{\ell-1} \alpha_j \prod_{i=1}^{j-1} (1-\alpha_i) < 1 \; . \]
Comparing this inequality with the previously obtained bound, we have:
\[ \Pr[S_a \succ_v S_d] \le \sum_{j=1}^{\ell-1} \beta \alpha_j \prod_{i=1}^j (1-\alpha_i) \le \beta \sum_{j=1}^{\ell-1} \alpha_j \prod_{i=1}^{j-1} (1-\alpha_i) < \beta \; . \]

This completes the proof of Inequality (\ref{eq:main3}) for the {\sc Ranking} setting.

%\bibliographystyle{abbrv}
%\bibliography{ref}

%\end{spacing}
%\end{document}

% !TEX root = main.tex
\subsection{Computing Stable Lotteries Efficiently}
\label{sec:efficient}
In this section, we turn to the problem of computing a stable lottery efficiently, and prove the algorithmic part of Theorem~\ref{thm:main}. We will show a running time of poly$(m^L, 1/\epsilon)$ to compute an $\epsilon$-approximately $L$-stable lottery (see Definition~\ref{def:lottery}).  This yields a polynomial running time for $L$-stable lotteries for constant $L$ in the {\sc Approval set} setting, and for stable lotteries in the {\sc Ranking} setting. Since our algorithm is {\em the same} for both {\sc Approval set} and {\sc Ranking} settings (assuming voters have the ability to compare two committees); it also yields a stable lottery when there is a mix of voters, some with {\sc Approval set} and others with {\sc Ranking} preferences.

First observe that since the size of the strategy sets $\S_d$ and $\S_a$ are $O(m^K)$, and since all we are doing is solving a zero-sum game, there is clearly a $\mbox{poly}(n, m^K)$ time algorithm.   The key observation is that the randomized dependent rounding procedure allows us to compute a defender strategy efficiently, and this allows us to compute $L$-stable lotteries in time that depends on $m^L$ instead of $m^K$. We then combine this with estimating the $V(S,S')$ values by sampling voters and asking them to compare $S$ and $S'$, thereby eliminating the dependence of the running time on $n$.  Since the details follow from fairly standard ideas, we only outline the argument.

Given a lottery $\Delta_a$ over $\S_a$, let {\sc Oracle}$(\Delta_a, \epsilon)$ be a procedure that finds $S_d \in \S_d$ such that 
\[ R_{\epsilon}(S_d, \Delta_a) \equiv \E_{S_a \sim \Delta_a} \left[ V(S_d, S_a) - (1+\epsilon) n \frac{|S_a|}{K}  \right] < 0 \; . \]

\begin{claim}
For $\epsilon \in (0,1/5)$, {\sc Oracle}$(\Delta_a, \epsilon)$ can be implemented in expected time $\mbox{poly}\left(m^L, \frac{1}{\epsilon} \right)$, and is correct with high probability.
\end{claim}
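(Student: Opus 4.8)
The plan is to realize {\sc Oracle}$(\Delta_a,\epsilon)$ by combining the probability-matching construction of Section~\ref{sec:round} with two nested layers of sampling, budgeting the approximation error so that the defining strict inequality $R_\epsilon(S_d,\Delta_a)<0$ survives. First I would rewrite the target condition in a sampling-friendly form. Since $\E_{S_a\sim\Delta_a}[\,n|S_a|/K\,]=n\beta$ with $\beta=\E_{S_a\sim\Delta_a}[|S_a|]/K$, the requirement $R_\epsilon(S_d,\Delta_a)<0$ is equivalent to $\phi(S_d)<(1+\epsilon)\beta$, where
\[ \phi(S_d)\;\equiv\;\frac{1}{n}\,\E_{S_a\sim\Delta_a}\!\left[V(S_d,S_a)\right]\;=\;\Pr_{S_a\sim\Delta_a,\, v\sim\mathrm{Unif}(\N)}\!\left[S_a\succ_v S_d\right]. \]
Because $\Delta_a$ is maintained explicitly as weights over its $O(m^L)$ pure strategies, the marginals $p_i=\Pr_{S_a\sim\Delta_a}[i\in S_a]$ and hence $\beta=\sum_i p_i/K$ are computed exactly in $\mbox{poly}(m^L)$ time. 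We may also assume $\Delta_a$ places no mass on the empty committee (it contributes payoff zero, so $R_\epsilon(S_d,\Delta_a)$ has the same sign after conditioning on nonempty committees), whence $\beta\ge 1/K\ge 1/m$; this lower bound on $\beta$ will control the sample complexity below.

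Next I would run probability matching on these marginals (set $q_i=\min(1,p_i/\beta)$, lift to $\vec{\alpha}$ with $\sum_i\alpha_i=K$ and $\alpha_i\in[q_i,1]$, and apply the randomized dependent rounding of Lemma~\ref{lem:depend}); this samples a committee $S_d\sim\Delta_d$ of size exactly $K$ in polynomial time. The existence arguments of Sections~\ref{sec:approval} and~\ref{sec:ranking} establish Inequality~(\ref{eq:main3}), i.e.\ $\Pr_{S_a\sim\Delta_a,\,S_d\sim\Delta_d}[S_a\succ_v S_d]<\beta$ for every voter $v$; averaging over a uniformly random voter gives $\E_{S_d\sim\Delta_d}[\phi(S_d)]<\beta$. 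Since $\phi\ge0$, Markov's inequality yields $\Pr_{S_d\sim\Delta_d}\big[\phi(S_d)\ge(1+\tfrac{\epsilon}{2})\beta\big]<\tfrac{1}{1+\epsilon/2}$, so a single draw is \emph{good} (satisfies $\phi(S_d)<(1+\tfrac{\epsilon}{2})\beta$) with probability at least $\tfrac{\epsilon}{2+\epsilon}>\tfrac{\epsilon}{3}=\Omega(\epsilon)$. Drawing $O(\tfrac{1}{\epsilon}\log\tfrac{1}{\delta})$ independent committees therefore produces at least one good $S_d$ with probability $1-\delta$.

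It remains to identify a good committee without computing $\phi$ exactly (which would reintroduce a dependence on $n$). For each sampled $S_d$ I would estimate $\phi(S_d)$ by drawing $t$ i.i.d.\ pairs $(S_a,v)$ with $S_a\sim\Delta_a$ and $v$ uniform over $\N$, querying whether $S_a\succ_v S_d$, and averaging. Each indicator is Bernoulli with mean $\phi(S_d)\le 1$, so by a Chernoff/Hoeffding bound $t=O\!\big(\tfrac{1}{\beta\epsilon^2}\log\tfrac{1}{\delta}\big)$ samples estimate $\phi(S_d)$ within additive error $\tfrac{\epsilon\beta}{4}$ with probability $1-\delta$. I would accept the first $S_d$ whose estimate falls below $(1+\tfrac{3\epsilon}{4})\beta$. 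The error budget then closes: any accepted committee has true value $\phi(S_d)<(1+\epsilon)\beta$ and hence satisfies $R_\epsilon(S_d,\Delta_a)<0$, while every good committee (true value $<(1+\tfrac{\epsilon}{2})\beta$) has its estimate below the threshold and is accepted. Since $\beta\ge 1/K\ge 1/m$, the per-committee sample count is $\mbox{poly}(m,1/\epsilon)$, each sample costs $\mbox{poly}(m^L)$ (drawing $S_a\sim\Delta_a$ and posing one comparison query), and the outer loop runs $O(\tfrac{1}{\epsilon}\log\tfrac{1}{\delta})$ times, giving total time $\mbox{poly}(m^L,1/\epsilon)$.

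The main obstacle is coordinating the two independent sources of randomness. The existence proof only guarantees a \emph{mean} bound $\E_{S_d}[\phi(S_d)]<\beta$ over the dependent-rounding distribution, so we must search for an individual good committee; simultaneously, we are forced to \emph{estimate} $\phi$ rather than evaluate it, to avoid an $n$-dependence. The delicate part is splitting the slack between the Markov step (which consumes a factor $\tfrac{\epsilon}{2}$ and sets the fraction of good committees to $\Omega(\epsilon)$) and the estimation step (which consumes the remaining slack through the acceptance threshold), so that acceptance is both sound (no committee with $\phi\ge(1+\epsilon)\beta$ is accepted) and complete (every good committee is accepted) with high probability, and so that the $\tfrac{1}{\beta}$ blow-up in the sample complexity is tamed by $\beta\ge 1/K$. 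The restriction $\epsilon<1/5$ is not essential to the oracle itself—any constant slack suffices—and is inherited from the outer multiplicative-weights loop that invokes it.
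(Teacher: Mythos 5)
Your proposal is correct and follows essentially the same route as the paper's proof: sample committees via probability matching with randomized dependent rounding, use Markov's inequality on the existence guarantee to show an $\Omega(\epsilon)$ fraction of draws are good, estimate the attacker's payoff by querying sampled voters, and split the $\epsilon$ slack so that acceptance is both sound and complete with high probability. The only differences are cosmetic (you estimate $\phi(S_d)$ by jointly sampling pairs $(S_a,v)$ and fix an explicit acceptance threshold, whereas the paper estimates $V(T_d,S_a)$ per support element and cascades through $R_\epsilon, R_{2\epsilon}, R_{3\epsilon}$), and you handle the empty-committee corner case more explicitly than the paper does.
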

\begin{proof}
First note that given $\Delta_a$, the randomized dependent rounding procedure in~\cite{GandhiKS01} can be implemented in $O(m^2)$ time.  The randomized dependent rounding outputs a distribution $\Delta_d$ that satisfies $\E_{T_d \sim \Delta_d}[R_0(T_d,\Delta_a)] < 0$.
We have:
\[ \E_{T_d \sim \Delta_d} \left[\E_{S_a \sim \Delta_a} \left[ V(T_d, S_a) \right] \right]\le \E_{S_a \sim \Delta_a} \left[n \frac{|S_a|}{K} \right] \; . \]
By Markov's inequality,  we have:
\begin{equation}
\label{eq:ten}
 \Pr_{T_d \sim \Delta_d} \left[ \E_{S_a \sim \Delta_a} \left[ V(T_d, S_a) \right] \ge (1+ \epsilon) \E_{S_a \sim \Delta_a} \left[n \frac{|S_a|}{K} \right] \right] < \frac{1}{1+\epsilon} \; .
 \end{equation}

Given an output $T_d$ of randomized dependent rounding, for each $S_a$ of size at most $L$, we can estimate $V(T_d,S_a)$  to within an additive $\frac{\epsilon n}{K}$ as $\hat{V}(T_d,S_a)$, by sampling $\mbox{poly}(m,\frac{1}{\epsilon})$ voters and asking them to compare $T_d$ with $S_a$. This allows us to estimate $\E_{S_a \sim \Delta_a}[V(T_d, S_a)]$ to within an additive $\frac{\epsilon n}{K}$ with high probability in $\mbox{poly}(m^L,\frac{1}{\epsilon})$ time, since the support of $\Delta_a$ has size $O(m^L)$. Since the entire algorithm takes poly$(m,1/\epsilon)$ steps, we can therefore assume that with probability at least $1-1/m$, our estimate of any $V(T_d,S_a)$ is accurate for all steps of the algorithm. Call the resulting estimate of $R_{\epsilon}(T_d, \Delta_a)$, where each $V(T_d,S_a)$ is estimated by sampling, as $\hat{R}_{\epsilon}(T_d, \Delta_a)$.

We first show that for $T_d \sim \Delta_d$, the event $\hat{R}_{2\epsilon}(T_d, \Delta_a) < 0$ happens with probability at least $\epsilon/2$.
Inequality (\ref{eq:ten}) implies $R_{\epsilon}(T_d, \Delta_a) < 0$ with probability at least $\frac{\epsilon}{1+\epsilon}$.
When this event happens, we have $R_{2\epsilon}(T_d, \Delta_a) \le R_{\epsilon}(T_d, \Delta_a) - \frac{\epsilon n}{K} < -\frac{\epsilon n}{K}$.
Since our sampled estimates of $\E_{S_a \sim \Delta_a}[V(T_d, S_a)]$ are accurate within an additive $\frac{\epsilon n}{K}$ with high probability,
  we have $\hat{R}_{2\epsilon}(T_d, \Delta_a) \le R_{2\epsilon}(T_d, \Delta_a) + \frac{\epsilon n}{K} < 0$.

Therefore, if the randomized dependent rounding procedure is repeated till $\hat{R}_{2\epsilon}(T_d, \Delta_a) < 0$, it takes $\mbox{poly}\left(\frac{1}{\epsilon}\right)$ trials in expectation.
We conclude the proof by noting that the resulting $T_d$ is a feasible solution for {\sc Oracle}$(\Delta_a, 3\epsilon)$ with high probability: $R_{3\epsilon}(T_d, \Delta_a) \le R_{2\epsilon}(T_d, \Delta_a) - \frac{\epsilon n}{K} \le \hat{R}_{2\epsilon}(T_d, \Delta_a) < 0$.
\end{proof}

We now use the multiplicative weight update (MWU) method~\cite{AroraHK} in a standard fashion: 

\begin{enumerate}
\item Given the adversary's strategy $\Delta_a^t$ at time $t$, run {\sc Oracle}$(\Delta^t_a, \epsilon)$ to compute $S^t_d$ satisfying $R_{\epsilon}(S^t_d, \Delta^t_a) < 0$. 
\item  Treat each committee $S_a$ of size at most $L$  as an expert, and set its gain to be $g^t(S_a) = V(S^t_d,S_a) - (1+\epsilon) n \frac{|S_a|}{K}$. Again, these gains can be approximately computed by sampling the voters and asking them to compare $S^t_d$ and $S_a$. 
\item Feed these gains to the MWU algorithm, which outputs a lottery $\Delta^{t+1}_a$ over the experts.
\end{enumerate}
 
 By a standard analysis, if we run the procedure for $T = \mbox{poly}\left(m,\frac{1}{\epsilon}\right)$ steps, the lottery $\Delta_d$ that chooses $S_d^t$ for $t = 1,2,\ldots,T$ with equal probability will be an $\epsilon'$-approximately $L$-stable lottery for $\epsilon' = O(\epsilon)$. Further, the algorithm only involves asking voters to compare two committees, and is hence the same for both {\sc Approval set} and {\sc Ranking}. 

% !TEX root = main.tex
\section{Deterministic Stability for {\sc Approval set} Setting}
\label{sec:det}
So far, we have considered existence of stable lotteries. As mentioned earlier, for the {\sc Approval set} setting, the existence of stable (deterministic) committees is a tantalizing open question.  In this section, we present some results that make progress towards the goal of finding stable committees. First, we show that stable committees always exist when the committee size is $K \le 3$, regardless of the number of candidates and voters. In addition, we show that the PAV rule that satisfies justified representation, fails to find {\em any non-trivial} approximation to stable committees. This strengthens the negative result in~\cite{Brill} to include inapproximability.

\subsection{Existence of Stable Committees When $K = 3$}
The next theorem states that for {\sc approval set} setting and $K = 3$, stable committee always exists and can be computed efficiently. Recall that $m$ is the number of candidates, $n$ is the number of voters, and $V(S, S')$ is the number of voters who prefer committee $S'$ over $S$.

\begin{theorem}
\label{thm:three}
In the {\sc approval set} setting, for committee size $K = 3$, a stable committee always exists. Moreover, it can be computed in time $O(m^3 n)$.
\end{theorem}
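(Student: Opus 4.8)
The plan is to unwind $3$-stability into three explicit ``no-deviation'' conditions and then build the committee greedily according to how many voters each candidate captures. Fix a committee $S$ with $|S|=3$, and call a voter $v$ \emph{level-$t$} if $|S\cap A_v|=t$. Since a voter prefers $S'$ only when $|S'\cap A_v|>|S\cap A_v|$, Definition~\ref{def:stable} with $K=3$ becomes: (C1) every single candidate $j$ is approved by fewer than $n/3$ level-$0$ voters; (C2) every pair $\{j,k\}$ is preferred by fewer than $2n/3$ voters; and (C3) no triple is preferred by all $n$ voters (Pareto-optimality). The goal is to exhibit an $S$ satisfying all three.

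The cleanest first step is a threshold observation. Writing $d_j=|\{v : j\in A_v\}|$, suppose $\max_j d_j < n/3$. Then for \emph{any} committee $S$ and any $S'$ with $|S'|=t\le 3$, a voter who prefers $S'$ must approve some member of $S'$, so $V(S,S') \le \sum_{j\in S'} d_j < t\,n/3 = \frac{|S'|}{K}\,n$; hence \emph{every} committee is stable and we may output an arbitrary one. Otherwise some candidate $a$ has $d_a\ge n/3$, and I include it in $S$. I then recurse on the set $R$ of voters who do \emph{not} approve $a$ (so $|R|\le 2n/3$), choosing the next member as a maximizer of approvals \emph{within $R$}, and iterate. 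Each chosen candidate whose residual degree is $\ge n/3$ removes at least $n/3$ voters from the residual population, so after at most two further choices the residual set drops below $n/3$ (the boundary case of exactly $n/3$ is handled by strict-inequality bookkeeping in the ``heavy'' branch). Since every level-$0$ voter lies in this final residual set, this is exactly what forces (C1).

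The hard part will be conditions (C2) and (C3), which do \emph{not} reduce to single-candidate degrees. A level-$1$ voter (one who approves exactly one chosen member) is irrelevant to (C1) yet still prefers an external pair $\{j,k\}$ when she approves \emph{both} $j$ and $k$; thus $V(S,\{j,k\})$ can exceed the naive bound obtained from (C1), and the three conditions become coupled through these level-$1$ voters. I expect to resolve this by a case analysis on the overlap structure of the heaviest candidates — in particular on whether the two largest approval sets are co-approved by many voters — choosing the second and third members so as to simultaneously drain the residual population (for (C1)) and avoid concentrating level-$1$ voters on any single external pair (for (C2)). Condition (C3), being Pareto-optimality among committees of size at most $3$, is the weakest and can be enforced directly (e.g.\ by passing to a Pareto-undominated committee), so the genuine difficulty is concentrated in (C2); pinning down the choice that defeats all pairs is the crux.

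Finally, for the running time, the key point is that the construction commits to $a$ and thereafter inspects only residual approval counts, isolating a small, structured family of candidate committees rather than searching all $\binom{m}{3}$ triples. Certifying a single candidate committee is dominated by checking (C3): verifying (C1) costs $O(mn)$ and (C2) costs $O(m^2 n)$, whereas ruling out every dominating triple costs $O(m^3 n)$. As the construction yields only $O(1)$ committees to certify, the overall running time is $O(m^3 n)$, matching the claim.
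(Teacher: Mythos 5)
Your setup is correct and matches the paper's first move: unwinding $3$-stability into (C1) (no candidate approved by $n/3$ or more level-$0$ voters), (C2) (no pair preferred by $2n/3$ or more voters), and (C3) (Pareto-optimality among triples), and disposing of (C3) by passing to a Pareto-undominated committee. The threshold observation is also valid. But the proof stops exactly at what you yourself call the crux: (C2) is never established, and the degree-based greedy you do specify cannot be repaired locally to establish it. Concretely, let $n/3$ voters approve $\{j,k,a\}$, another $n/3$ approve $\{j,k,b\}$, and $n/3$ approve $\{c\}$, with one extra candidate $x$ approved by nobody. Your greedy picks $j$ (maximum degree $2n/3$), then must pick $c$ (otherwise $\{c\}$ itself blocks, as $V(\cdot,\{c\}) = n/3$), and at that point the residual is empty, so \emph{every} completion satisfies (C1). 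Yet the completion $\{j,c,x\}$ is blocked by the pair $\{j,k\}$: the first two groups are level-$1$ and approve both $j$ and $k$, so $V(\{j,c,x\},\{j,k\}) = 2n/3$. Completing with $k$ (or $a$, or $b$) instead gives a stable committee. So draining level-$0$ voters genuinely underdetermines the committee; (C2) needs a selection criterion of its own, and your proposal contains neither the criterion nor a proof that a good completion always exists --- only the stated expectation that a case analysis will work.

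The missing idea, which is the actual content of the paper's proof, is to branch on \emph{pair} statistics rather than individual degrees. Writing $n_2(\{a,b\})$ for the number of voters approving both $a$ and $b$, and $n_{\ge 1}(\{a,b\})$ for the number approving at least one: (i) if some pair has $n_2(\{a,b\}) > n/3$, that pair can only be blocked by a singleton $c$; adding it yields a triple $T$ with $n_{\ge 2}(T) > n/3$ and $n_1(T) \ge n/3$, whence no pair blocks ($V(T,T') \le n - n_{\ge 2}(T) < 2n/3$) and no singleton blocks ($V(T,T') \le n_0(T) < n/3$); (ii) if no such pair exists but some pair has $n_{\ge 1}(\{a,b\}) \ge 2n/3$, again add a blocking singleton and bound $V(T,T') \le n_2(T') + n_0(T) < 2n/3$, using the failure of case (i) to get $n_2(T') \le n/3$; (iii) otherwise no pair can block \emph{any} committee, since $V(S,T') \le n_{\ge 1}(T') < 2n/3$, and a two-step greedy on blocking singletons either terminates or contradicts the case hypothesis via $n_{\ge 1}(\{a,b\}) = V(\varnothing,\{a\}) + V(\{a\},\{b\}) \ge 2n/3$. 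Note that case (i) may force committing to a pair whose members individually have small degree --- something a degree-based greedy will never do --- and the greedy in case (iii) ranks candidates by marginal blocking counts $V(\{a\},\{b\})$, not by residual degrees. Your runtime accounting ($O(m^2 n)$ to locate the triple, $O(m^3 n)$ dominated by the Pareto-improvement step) matches the paper's, but it is contingent on the construction you have not supplied.
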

\begin{proof}
For a committee $S$, we use $n_i(S)$ for the number of voters who approve exactly $i$ candidates in $S$,
  and we use $n_{\ge i}(S)$ for the number of voters who approve at least $i$ candidates in $S$.

First observe that it is sufficient to find committee $T$ with no blocking committees of size $1$ or $2$.
This is because either $T$ is stable, or it has blocking committees of size $3$ that Pareto-dominate $T$.
Consider any Pareto-optimal committee $T'$ that Pareto-dominates $T$.
% Such $T'$ is guaranteed to exist.
$T'$ is stable because it has no blocking committees of size $3$ due to its Pareto-optimality; and since $T'$ makes all voters happier than they were under $T$, $T'$ has no blocking committees of size $1$ or $2$ either.

The rest of the proof shows that such committee $T$ always exists via the following case analysis:
\begin{enumerate}
\item There exists a committee $S = \{a, b\}$ such that $n_2(S) > \frac{n}{3}$.
\item There are no committees of size $2$ that satisfy (1), but there exists a committee $S = \{a, b\}$ such that $n_{\ge 1}(S) \ge \frac{2n}{3}$.
\item There are no committees of size $2$ that satisfy (1) or (2).
\end{enumerate}

For Case (1), assume w.l.o.g. that $S$ has a blocking committee of size $1$ or $2$.
(Otherwise we can choose $T = S$.)
However, any committee $S'$ of size $2$ is not blocking, because $v(S, S') \le n - n_2(S) < \frac{2n}{3}$.
Therefore, it must be the case that there is a blocking committee $S' = \{c\}$ of size one.
We add $c$ to $S$ and consider the committee $T = \{a, b, c\}$.
Note that $n_{\ge 2}(T) \ge n_2(S) > \frac{n}{3}$, and $n_1(T) \ge v(S, \{c\}) \ge \frac{n}{3}$ because voters who prefer $\{c\}$ over $S$ must approve $c$ but not anyone in $S = \{a, b\}$.
We argue that $T$ satisfies our requirements, because for any committee $T'$ of size $2$, we have $V(T, T') \le n - n_{\ge 2}(T) < \frac{2n}{3}$; and for any committee $T'$ of size $1$, we have $V(T, T') \le n_0(T) =  n - n_{\ge 2}(T) - n_1(T) < \frac{n}{3}$.

For Case (2), assume w.l.o.g. there is a candidate $c$ with $v(S, \{c\}) > 0$.
Consider the committee $T = \{a, b, c\}$.
Note that $n_{\ge 1}(T) = n_{\ge 1}(S) + V(S, \{c\}) > \frac{2n}{3}$.
We argue that $T$ satisfies our requirements,
  because for any $T'$ of size $1$, we have $v(T, T') \le n_0(T) = n - n_{\ge 1}(T) < \frac{n}{3}$;
  and for any $T'$ of size $2$, we have $V(T, T') \le n_2(T') + n_0(T) < \frac{2n}{3}$.
The last inequality is because voters who prefer $T'$ over $T$ must either approve $2$ candidates in $T'$, or approve $0$ candidates in $T$; and $n_2(T') \le \frac{n}{3}$ since we are not in Case (1).

Finally, in Case (3), there are no blocking committee $T'$ of size $2$ because $V(\varnothing, T') = n_{\ge 1}(T') < \frac{2n}{3}$.
This allows us to focus only on blocking committees of size $1$.
There is w.l.o.g. some candidate $a$ such that $V(\varnothing, \{a\}) \ge \frac{n}{3}$, otherwise we can set $T = \varnothing$.
Then again, there is w.l.o.g. some candidate $b$ such that $V(\{a\}, \{b\}) \ge \frac{n}{3}$, otherwise we can set $T = \{a\}$.
However, this contradicts the assumption that we are in Case (3), because $n_{\ge 1}(\{a, b\}) = V(\varnothing, \{a\}) + V(\{a\}, \{b\}) \ge \frac{2n}{3}$.

It takes time $O(m^2 n)$ to find a committee $T$ with no size $1$ and $2$ blocking committees.
This can be done by enumerating all committees of size $2$ in time $O(m^2 n)$ to decide which case we are in, and then the bottleneck is Case (1) in which we can find a blocking committee of size $1$ in time $O(m n)$.
Starting from $T$, we can find a stable solution $T'$ in time $O(m^3 n)$ by enumerating all committees of size $3$ and maintaining the current Pareto-optimal solution.
Hence, the overall running time is $O(m^2 n + mn + m^3 n) = O(m^3 n)$.
\end{proof}

% !TEX root = main.tex
\subsection{Proportional Approval Voting Can Be Far From Stable}
\label{sec:counter}
Recall that the PAV rule works as follows: For any committee $S$, suppose voter $v$ approves $r$ candidates in $S$, then we set the score $q_v(S) = 1 + \frac{1}{2} + \cdots + \frac{1}{r}$. The PAV rule finds $S$ that maximizes $\sum_v q_v(S)$. We show that the PAV rule cannot be better than $O(\sqrt{K})$-approximately stable.

\begin{theorem}
In the {\sc approval set} setting, the PAV rule may output a committee that is not $o(\sqrt{K})$-approximately stable.
\label{lem:PAV_lower}
\end{theorem}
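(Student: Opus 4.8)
The statement is a lower bound, so the plan is to exhibit an explicit family of {\sc approval set} instances, parameterized by the committee size $K$, on which PAV returns a committee $S$ that admits a blocking committee $S'$ of size $\Theta(\sqrt K)$ with capture count $V(S,S') = \Omega(n)$; then the stability ratio $V(S,S')\big/\big(\tfrac{|S'|}{K}n\big) = \Theta\!\big(\tfrac{V(S,S')\,K}{|S'|\,n}\big) = \Omega(\sqrt K)$, which is exactly the claim that no $o(\sqrt K)$-approximate stability holds. Concretely I must (i) describe the instance, (ii) pin down PAV's winning committee, (iii) produce $S'$ and count the voters who strictly prefer it, and (iv) do the arithmetic. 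This strengthens the qualitative ``PAV is not stable'' example of Aziz {\em et al.}~\citet{Brill} by driving the ratio to $\Omega(\sqrt K)$ rather than a constant.

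The mechanism I want to exploit is the gap between PAV's \emph{concave} objective and the capture-count notion of stability. Writing $H_r = 1 + \tfrac12 + \cdots + \tfrac1r$, PAV maximizes $\sum_v H_{|S\cap A_v|}$, whose marginals $H_{r+1}-H_r = \tfrac1{r+1}$ are strongly diminishing: the first approved candidate a voter receives is worth $1$, a second only $\tfrac12$, and so on. Stability, by contrast, only cares whether a coalition can strictly improve its members' \emph{counts}. Thus the right instance should force PAV to buy ``breadth'' (give a huge population exactly one approved candidate each, where marginals equal $1$) in preference to ``depth'' (the $\tfrac12$-marginal second candidates), while arranging that a single compact committee $S'$ of $\Theta(\sqrt K)$ \emph{shared} candidates simultaneously raises a constant fraction of those voters from one approved candidate to two. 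The sharing is essential: for $|S'|=\Theta(\sqrt K)$ to capture $\Omega(n)$ voters, each candidate of $S'$ must lie in the approval sets of $\Theta(n/\sqrt K)$ of the under-served voters.

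For step (ii) I would determine PAV's global optimum by an exchange argument: because $H$ is concave, removing an approved candidate from a voter who currently has $r$ of them costs only $\tfrac1r$, so PAV ``spreads'' its $K$ seats to equalize coverage, and one certifies that no single swap improves the score. The delicate point, and what I expect to be the main obstacle, is twofold. First, one must rule out \emph{all} deviations of PAV's committee, i.e.\ establish a \emph{global} optimum, not just a local one, since PAV optimizes a global objective and the analysis must certify that the intended ``bad'' committee is genuinely score-maximizing. Second, and more fundamentally, there is a real tension in the calibration. Forcing PAV to under-serve a group requires strong competition for the seats, but the naive ways of supplying that competition dilute either the captured fraction or the budget: a single cohesive under-served group can be shown, via the submodular-marginal inequality $\sum_{s\in S}\big(f(S)-f(S\setminus s)\big)\le f(S)\le nH_K$ together with the swap-optimality of $S$, to yield a ratio of only $O(\log K)$; and placing $\sqrt K$ independent scaled copies of a constant-ratio gadget under a shared budget $K$ yields, after the arithmetic, only a \emph{constant} ratio because the deviation size grows in lockstep. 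Hence the $\Omega(\sqrt K)$ bound cannot come from a single group or from a product construction; it must come from a coupled/overlapping design in which the groups serve as each other's seat-competition through the shared candidates of $S'$, so that one size-$\Theta(\sqrt K)$ deviation aggregates the shallow representation PAV leaves across a constant fraction of the electorate. Getting this coupling to be simultaneously (a) PAV-optimal and (b) blocked with ratio $\Omega(\sqrt K)$ is the crux; once the instance is fixed, steps (iii)--(iv) reduce to routine counting.
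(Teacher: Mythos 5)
Your top-level reduction is the right one and matches the paper's: exhibit an instance where PAV's committee $S$ admits a blocking committee $S'$ with $|S'| = \Theta(\sqrt{K})$ and $V(S,S') = \Omega(n)$, after which the ratio arithmetic is immediate. But the proposal stops exactly where the proof has to start: no instance is constructed, and you yourself flag the construction as the unresolved crux, so this is a plan rather than a proof. Worse, the one mechanism you commit to --- PAV leaves $\Omega(n)$ voters with exactly \emph{one} approved committee member, and a size-$\Theta(\sqrt{K})$ committee $S'$ lifts a constant fraction of them from one approved candidate to two --- is provably unrealizable, by the same aggregate-marginal accounting you invoke elsewhere. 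Each captured voter has one approved candidate in $S$ and at least two in $S'$, hence at least one in $S' \setminus S$; by pigeonhole some $c \in S' \setminus S$ is approved by $\Omega(n/\sqrt{K})$ voters each holding exactly one member of $S$, so adding $c$ raises the PAV score by at least $\frac{1}{2}\cdot\Omega(n/\sqrt{K})$. On the other hand, the total cost of all single deletions, $\sum_{c' \in S}\sum_{v:\, c' \in A_v} \frac{1}{|S \cap A_v|}$, equals the number of voters approving at least one member of $S$, hence is at most $n$, so some $c' \in S$ costs at most $n/K$ to delete. For large $K$ the swap $\left(S \setminus \{c'\}\right) \cup \{c\}$ strictly increases the PAV score, contradicting the global (hence swap-local) optimality of $S$. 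So no instance of the shape you describe exists, and the "one-to-two" calibration must be abandoned, not refined.

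The paper resolves the tension you identified in the opposite direction: the deviating voters are not at depth one but already hold $\Theta(\sqrt{K})$ committee members. With $K = P + P^2/8$ (so $P = \Theta(\sqrt{K})$), half the voters $\mathcal{N}_R$ all approve a common block $B$ of $P/2$ candidates plus one personal candidate each from a set $D$ of $P/2$ candidates; the other half $\mathcal{N}_L$ approve a common block $A$ of $P/2$ candidates plus $P/2$ candidates each from a large pool $C$. Because every $\mathcal{N}_R$ voter already holds all of $B$, each $D$ candidate carries a harmonic discount of $\frac{1}{P/2+1}$, and its PAV marginal $\frac{1}{P/2+1}\cdot\frac{n}{P}$ loses (barely, by choice of constants) to the marginal $\frac{1}{P}\cdot\frac{2n}{P}$ of the $C$ candidates, so PAV outputs $A \cup B \cup C$ and buys nothing in $D$. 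The blocking committee is $S' = B \cup D$ of size $P$: it repurchases the deviators' entire current representation and adds one improvement each, so all $n/2$ of them strictly prefer it, giving ratio $\frac{n/2}{(P/K)\,n} = \frac{K}{2P} = \Theta(\sqrt{K})$. Note also that your assertion that every candidate of $S'$ must be approved by $\Theta(n/\sqrt{K})$ of the captured voters fails here: the $B$ part of $S'$ is approved by all $n/2$ of them. The coupling you were searching for is achieved by making the deviating group's representation \emph{shared} and hence small as a set of candidates --- cheap enough to be included wholesale in the deviation --- not by aggregating depth-one voters, which PAV optimality forbids.
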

\begin{proof}
Consider the following example (illustrated in Figure~\ref{fig:PAV} where each rectangle is a candidate whose projection on $x$-axis corresponds to the voters who approve her): The set of voters is divided into two equal-sized disjoint sets $\mathcal{N}_L$ and $\mathcal{N}_R$. There are $4$ different sets of candidates:
\begin{enumerate}
\item Set $A$: There are $\frac{P}{2}$ ($P$ is a parameter related to $K$ to be set later) candidates in $A$. The voters who approve a candidate in $A$ are exactly those voters in $\mathcal{N}_L$.
\item Set $B$: There are $\frac{P}{2}$ candidates in $B$. The voters who approve a candidate in $B$ are exactly those voters in $\mathcal{N}_R$.
\item Set $C$: There are $\frac{P}{2} \cdot \frac{P}{4}$ candidates in $C$. Voters in $\mathcal{N}_R$ do not approve candidates in $C$. Every candidate in $C$ is approved by $\frac{\mathcal{N}_L}{P / 4}$ voters. Each voter in $\mathcal{N}_L$ approves $\frac{P}{2}$ candidates in $C$.
\item Set $D$: There are $\frac{P}{2}$ candidates in $D$. Voters in $\mathcal{N}_L$ do not approve candidates in $D$. Every candidate in $D$ is approved by $\frac{\mathcal{N}_R}{P / 2}$ voters. Each voter in $\mathcal{N}_R$ approves one candidate in $D$.
\end{enumerate}

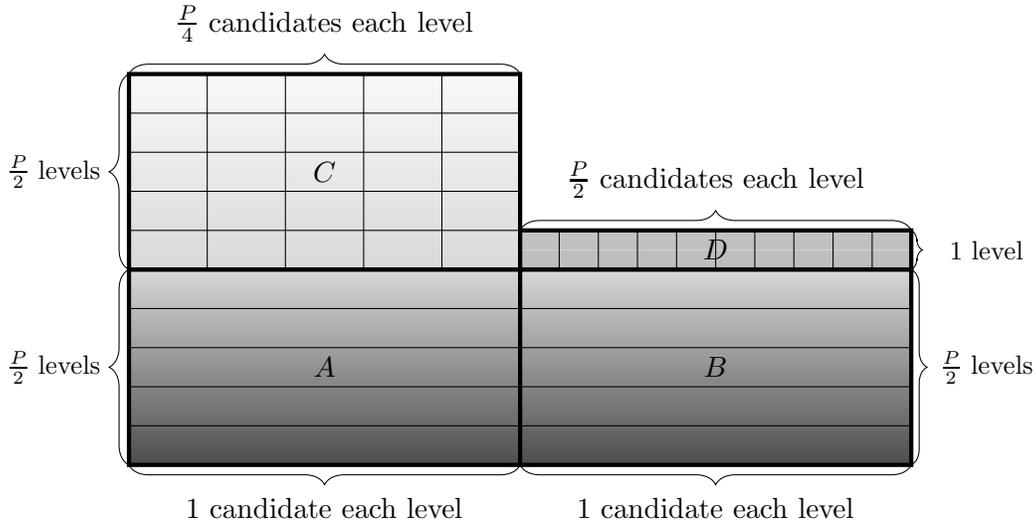
\begin{figure}[htbp]
\centering
\begin{tikzpicture}[scale=0.52]
\filldraw[ultra thick, top color=gray!30!,bottom color=gray!140!] (0,0) rectangle node{$A$} +(10,5);
\filldraw[ultra thick, top color=gray!30!,bottom color=gray!140!] (10,0) rectangle node{$B$} +(10,5);
\filldraw[ultra thick, top color=gray!5!,bottom color=gray!30!] (0,5) rectangle node{$C$} +(10,5);
\filldraw[ultra thick, top color=gray!25!,bottom color=gray!30!] (10,5) rectangle node{$D$} +(10,1);
\draw [decorate,decoration={brace,amplitude=7.5pt},yshift=0pt]
(0,0) -- (0,5.0) node [black,midway,xshift=-1cm] {\small $\frac{P}{2}$ levels};
\draw [decorate,decoration={brace,amplitude=7.5pt},yshift=0pt]
(0,5.0) -- (0,10.0) node [black,midway,xshift=-1cm] {\small $\frac{P}{2}$ levels};
\draw [decorate,decoration={brace,amplitude=6pt,mirror},yshift=0pt]
(20,0) -- (20,5) node [black,midway,xshift=1cm] {\small $\frac{P}{2}$ levels};
\draw [decorate,decoration={brace,amplitude=4.2pt,mirror},yshift=0pt]
(20,5) -- (20,6) node [black,midway,xshift=1cm] {\small $1$ level};
\draw [decorate, decoration={brace, amplitude=7.5pt}] (0,10) -- (10,10) node [midway, above=0.3cm] {$\frac{P}{4}$ candidates each level};
\draw [decorate, decoration={brace, amplitude=7.5pt}] (10,6) -- (20,6) node [midway, above=0.3cm] {$\frac{P}{2}$ candidates each level};
\draw [decorate, decoration={brace, amplitude=7.5pt}] (10,0) -- (0,0) node [midway, below=0.3cm] {$1$ candidate each level};
\draw [decorate, decoration={brace, amplitude=7.5pt}] (20,0) -- (10,0) node [midway, below=0.3cm] {$1$ candidate each level};

\draw (0, 1) -- (20, 1);
\draw (0, 2) -- (20, 2);
\draw (0, 3) -- (20, 3);
\draw (0, 4) -- (20, 4);

\draw (0, 6) -- (10, 6);
\draw (0, 7) -- (10, 7);
\draw (0, 8) -- (10, 8);
\draw (0, 9) -- (10, 9);

\draw (2, 5) -- (2, 10);
\draw (4, 5) -- (4, 10);
\draw (6, 5) -- (6, 10);
\draw (8, 5) -- (8, 10);

\draw (11, 5) -- (11, 6);
\draw (12, 5) -- (12, 6);
\draw (13, 5) -- (13, 6);
\draw (14, 5) -- (14, 6);
\draw (15, 5) -- (15, 6);
\draw (16, 5) -- (16, 6);
\draw (17, 5) -- (17, 6);
\draw (18, 5) -- (18, 6);
\draw (19, 5) -- (19, 6);
\end{tikzpicture}
\caption{PAV cannot be better than $O(\sqrt{K})$-approximately stable.}
\label{fig:PAV}
\end{figure}

Consider running the PAV rule on this instance when $K = P + \frac{P^2}{8}$.
%Any candidate of type $A$ (resp. type $B$) is strictly better than any candidate of type $C$ (type $D$) regardless of the rest of the committee. Therefore PAV will choose all candidates of types $A$ and $B$.
PAV will first choose all candidates in $A$ and $B$.
Note that at this point, any candidate in $C$ has marginal contribution at least $\frac{1}{P} \cdot \frac{\mathcal{N}_L}{P / 4}$, while any candidate in $D$ has marginal contribution at most $\frac{1}{(P/2) + 1} \cdot \frac{\mathcal{N}_R}{P / 2}$, which is strictly smaller.
Therefore, PAV will select the committee $S = A \cup B \cup C$.

However, voters in $\mathcal{N}_R$ can form a coalition and deviate to committee $S' = B \cup D$.
All voters in $\mathcal{N}_R$ are better off, thus $V(S, S') = |\mathcal{N}_R| = \frac{n}{2}$.
For PAV to be $\epsilon$-approximately stable, we need
\[
\frac{n}{2} \leq (1 + \epsilon) \frac{|S'|}{K} \cdot n = (1 + \epsilon) \frac{P}{K} \cdot n \; .
\]
Since $K = P + \frac{P^2}{8}$, we need $\epsilon = \Omega(\sqrt{K})$.
\end{proof}

\section{Conclusions}
\label{sec:open}
We view our results as a first step in understanding the general notion of stability for committee selection problems. We conclude with several open questions. The first major open question is the existence of deterministic stable committees in the {\sc Approval set} setting, generalizing our positive result for $K = 3$ to general $K$. We conjecture that such a stable committee always exists. Via computer-assisted search, we have shown that this conjecture holds for small numbers of voters and candidates ($m + n \le 14$). However, as we have seen, existing voting rules seem incapable of finding such a committee, which makes this question very tantalizing.

The next open question is whether a stable lottery exists for ordinal preferences over committees that result from more general cardinal preferences. Our ``holy grail'' is to have {\em one algorithm} that finds a stable lottery using ordinal preferences over committees, that works for a wide range of underlying cardinal utilities.  One immediate extension of our work will be to the case where the utility of a voter is {\em additive} in the set of candidates in the committee. An obvious approach for attacking this problem is to extend Lemma~\ref{lem:pm} to the case where $Y_i \sim $ Bernoulli$(s_i,p_i)$, but this will require new ideas. 

Finally, our algorithm for $L$-stable lotteries has running time that depends on $m^L$. Though this is not an issue for {\sc Ranking}, it would be good to remove this dependence for {\sc approval set} and develop a poly-time algorithm for any $L$,  or show that this is not possible.

\section*{Acknowledgments}
Kamesh Munagala is supported by NSF grants CCF-1408784, CCF-1637397, and IIS-1447554; and research awards from Adobe and Facebook. Kangning Wang is supported by NSF grants IIS-1447554 and CCF-1637397. This work was done while Zhihao Jiang was visiting Duke University.

\bibliographystyle{acm}
\bibliography{abb,ultimate,refs}

\begin{thebibliography}{10}

\bibitem{AroraHK}
{\sc Arora, S., Hazan, E., and Kale, S.}
\newblock The multiplicative weights update method: a meta-algorithm and
  applications.
\newblock {\em Theory of Computing 8}, 6 (2012), 121--164.

\bibitem{Brill}
{\sc Aziz, H., Brill, M., Conitzer, V., Elkind, E., Freeman, R., and Walsh, T.}
\newblock Justified representation in approval-based committee voting.
\newblock {\em Social Choice and Welfare 48}, 2 (2017), 461--485.

\bibitem{PJR2018}
{\sc Aziz, H., Elkind, E., Huang, S., Lackner, M., Fernandez, L.~S., and
  Skowron, P.}
\newblock On the complexity of extended and proportional justified
  representation.
\newblock In {\em AAAI Conference on Artificial Intelligence (AAAI)\/} (2018).

\bibitem{Moulin}
{\sc Bogomolnaia, A., and Moulin, H.}
\newblock A new solution to the random assignment problem.
\newblock {\em Journal of Economic Theory 100}, 2 (2001), 295 -- 328.

\bibitem{Brams2007}
{\sc Brams, S.~J., Kilgour, D.~M., and Sanver, M.~R.}
\newblock A minimax procedure for electing committees.
\newblock {\em Public Choice 132}, 3 (2007), 401--420.

\bibitem{Chamberlain}
{\sc Chamberlin, J.~R., and Courant, P.~N.}
\newblock Representative deliberations and representative decisions:
  Proportional representation and the borda rule.
\newblock {\em The American Political Science Review 77}, 3 (1983), 718--733.

\bibitem{Droop}
{\sc Droop, H.~R.}
\newblock On methods of electing representatives.
\newblock {\em Journal of the Statistical Society of London 44}, 2 (1881),
  141--202.

\bibitem{ElkindFSS17}
{\sc Elkind, E., Faliszewski, P., Skowron, P., and Slinko, A.}
\newblock Properties of multiwinner voting rules.
\newblock {\em Social Choice and Welfare 48}, 3 (2017), 599--632.

\bibitem{Fain2016}
{\sc Fain, B., Goel, A., and Munagala, K.}
\newblock The core of the participatory budgeting problem.
\newblock In {\em Proceedings of the 12th Conference on Web and Internet
  Economics (WINE)\/} (2016), pp.~384--399.

\bibitem{FainMS18}
{\sc Fain, B., Munagala, K., and Shah, N.}
\newblock Fair allocation of indivisible public goods.
\newblock In {\em Proceedings of the 2018 {ACM} Conference on Economics and
  Computation (EC)\/} (2018), pp.~575--592.

\bibitem{FeigeV}
{\sc Feige, U., and Vondr\'ak, J.}
\newblock The submodular welfare problem with demand queries.
\newblock {\em Theory of Computing 6}, 11 (2010), 247--290.

\bibitem{lindahlCore}
{\sc Foley, D.~K.}
\newblock {L}indahl's solution and the core of an economy with public goods.
\newblock {\em Econometrica 38}, 1 (1970), 66--72.

\bibitem{Psomas}
{\sc Friedman, E., Gkatzelis, V., Psomas, C.~A., and Shenker, S.}
\newblock Fair and efficient memory sharing: Confronting free riders.
\newblock In {\em $33^{rd}$ AAAI Conference on Artificial Intelligence
  (AAAI)\/} (2019).

\bibitem{GandhiKS01}
{\sc Gandhi, R., Khuller, S., Parthasarathy, S., and Srinivasan, A.}
\newblock Dependent rounding and its applications to approximation algorithms.
\newblock {\em Journal of the ACM 53}, 3 (2006), 324--360.

\bibitem{HyllandZ}
{\sc Hylland, A., and Zeckhauser, R.}
\newblock The efficient allocation of individuals to positions.
\newblock {\em Journal of Political Economy 87}, 2 (1979), 293--314.

\bibitem{ROBUS}
{\sc Kunjir, M., Fain, B., Munagala, K., and Babu, S.}
\newblock {ROBUS:} fair cache allocation for data-parallel workloads.
\newblock In {\em Proceedings of the 2017 {ACM} International Conference on
  Management of Data ({SIGMOD})\/} (2017), pp.~219--234.

\bibitem{Lu2011}
{\sc Lu, T., and Boutilier, C.}
\newblock Budgeted social choice: From consensus to personalized decision
  making.
\newblock In {\em Proceedings of the 22nd International Joint Conference on
  Artificial Intelligence (IJCAI)\/} (2011), IJCAI'11, pp.~280--286.

\bibitem{Meir2008}
{\sc Meir, R., Procaccia, A.~D., Rosenschein, J.~S., and Zohar, A.}
\newblock Complexity of strategic behavior in multi-winner elections.
\newblock {\em J. Artif. Int. Res. 33}, 1 (Sept. 2008), 149--178.

\bibitem{Monroe}
{\sc Monroe, B.~L.}
\newblock Fully proportional representation.
\newblock {\em The American Political Science Review 89}, 4 (1995), 925--940.

\bibitem{coreConjectureCounter}
{\sc Muench, T.~J.}
\newblock The core and the lindahl equilibrium of an economy with a public
  good: An example.
\newblock {\em Journal of Economic Theory 4}, 2 (1972), 241 -- 255.

\bibitem{PanconesiS97}
{\sc Panconesi, A., and Srinivasan, A.}
\newblock Randomized distributed edge coloring via an extension of the
  chernoff-hoeffding bounds.
\newblock {\em {SIAM} J. Comput. 26}, 2 (1997), 350--368.

\bibitem{Procaccia2008}
{\sc Procaccia, A.~D., Rosenschein, J.~S., and Zohar, A.}
\newblock On the complexity of achieving proportional representation.
\newblock {\em Social Choice and Welfare 30}, 3 (Apr 2008), 353--362.

\bibitem{Sanchez}
{\sc S{\'a}nchez-Fern{\'a}ndez, L., Elkind, E., Lackner, M., Fern{\'a}ndez, N.,
  Fisteus, J.~A., Basanta~Val, P., and Skowron, P.}
\newblock Proportional justified representation.
\newblock In {\em Proceedings of the 31st AAAI Conference on Artificial
  Intelligence (AAAI)\/} (2017), pp.~670--676.

\bibitem{scarfCore}
{\sc Scarf, H.~E.}
\newblock The core of an n person game.
\newblock {\em Econometrica 35}, 1 (1967), pp. 50--69.

\bibitem{Thiele}
{\sc Thiele, T.~N.}
\newblock Om flerfoldsvalg.
\newblock In {\em Oversigt over det Kongelige Danske Videnskabernes Selskabs
  Forhandlinger\/} (1895), pp.~415--441.

\end{thebibliography}

\end{document}